 \documentclass[10pt,twocolumn,twoside]{IEEEtran} 
\IEEEoverridecommandlockouts                              

\usepackage{graphics} 
\usepackage{epsfig} 
\usepackage{mathpazo} 
\usepackage{amsmath} 

\usepackage{amssymb,amsthm}  
\usepackage{amscd}
\usepackage[noadjust]{cite}
\usepackage{float}
\usepackage{times}
\usepackage{url}

\newtheorem{Definition}{Definition}

\newtheorem{Lemma}{Lemma}

\newtheorem{Remark}{Remark}

\newtheorem{theorem}{Theorem}
\newtheorem{pro}[theorem]{Proposition}
\newtheorem{lem}[theorem]{Lemma}
\newtheorem{cor}[theorem]{Corollary}

\newcommand{\R}{\mathbb{R}}
\newcommand{\G}{G}

\renewcommand{\span}{\operatorname{span}}
\renewcommand{\cal}{\mathcal}
\newcommand{\ol}{\overline}
\newcommand{\codim}{\operatorname{codim}}
\newcommand{\GL}{\operatorname{GL}}
\newcommand{\bb}{\mathbb}
\newcommand{\tr}{\operatorname{tr}}

\title{\LARGE \bf
Controllability of Formations over Time-varying Graphs}


\author{Xudong Chen, M.-A. Belabbas, Tamer Ba\c sar
\thanks{Xudong Chen, M.-A. Belabbas, and Tamer Ba\c sar are with Coordinated Science Lab, University of Illinois at Urbana-Champaign,
       Champaign, IL 61820, U.S. Emails: 
       \{xdchen, belabbas, basar1\}@illinois.edu}%
}

\begin{document}

\maketitle
\thispagestyle{empty}
\pagestyle{empty}

\begin{abstract} In this paper, we investigate the controllability of a class of formation control systems. Given a directed graph, we assign an agent to each of its vertices and let the edges of the graph describe the information flow in the system. We relate the strongly connected components of this graph to the reachable set of the formation control system. Moreover, we show that the formation control model is approximately path-controllable over a path-connected, open dense subset as long as the graph is weakly connected and satisfies some mild assumption on the numbers of vertices of the strongly connected components.\end{abstract}

\section{Introduction}

We investigate here the controllability and path-controllability of a non-linear formation control system with $N$ agents in $\R^n$. As is usually done, we use a directed graph $G = (V,E)$, with vertex set $V = \{1,\ldots, N\}$ and edge set  $E$, to describe the information flow in the system. We denote by $i\to j$ an edge in $G$.  Precisely, to each vertex corresponds an agent and  by a slight abuse of notation, we refer to agent $i$ as  $x_i\in \mathbb{R}^n$. Denote by $V^-_i$ the set of out-neighbors of $i$: $
V^-_i:= \{j\in V\mid i\to j\in E\}
$. The  motion of  agent $x_i$  is given by 
\begin{equation}\label{MODEL}
\dot {x}_i = \sum_{j\in V^-_i} u_{ij}(t,x) (x_j - x_i)
\end{equation} 
where each $u_{ij}$ is an integrable real-valued function. 
This formation control model and  variations of it have been widely investigated in recent years \cite{GP,krick2009,XC2014ACC,XC2015ACC,BDOthree,AB2012CDC,sun2014CDC,USZB,mou2014CDC,AL2014ECC,AB2013TAC}. 
Questions about how these scalar functions, $u_{ij}$'s, are designed to organize multi-agent systems~\cite{krick2009,GP}, questions about convergence of the dynamics~\cite{XC2014ACC}, questions about local/global stabilization of the target formation~\cite{AL2014ECC, AB2013TAC}, and questions about robustness of the formation control laws~\cite{AB2012CDC,sun2014CDC,USZB,mou2014CDC} have all been investigated to some extent. 

In this paper, we investigate  whether we can steer the multi-agent system~\eqref{MODEL} from any initial configuration to any target configuration through the choice of the $u_{ij}$'s. The same question was addressed earlier for an undirected graph~\cite{XC2014CDC}. It was shown in~\cite{XC2014CDC} that if the undirected graph $G$ is connected and $(N-n)>1$, $n>1$, then the control system is controllable over a path-connected, open dense subset of the configuration space (comprised of configurations with fixed centroid). We assume here without loss of generality that $G$ is weakly connected and that $N>n$. In case $\G$ is not weakly connected, one can analyze the weakly connected components independently using the results of this paper, and in case $N \leq n$, one can see that the dynamics~\eqref{MODEL} evolves in a proper affine subspace of $\R^n$ with its dimension less than $N$. Thus, one can  use the results of this paper, after a simple change of variables, to study that case as well.

One of main contributions of this paper is to identify a class of weakly connected directed graphs for which the  system~\eqref{MODEL} is controllable. In particular, we will establish a relation between the geometry of formations,  the structure of the underlying network topology, and the controllability of the formation control system. This paper expands on the preliminary version~\cite{ChenBelabbasBasarDirectedCDC2015} by, among others, providing an analysis of the formation control system~\eqref{MODEL} with time-varying graphs,  a finer description of their reachable sets 
and  proofs that were omitted.

Following this introduction, the remainder of the paper is organized as follows. In the next section, we  introduce some  definitions and state the main theorem. We also derive properties of the configuration space of the formation control system; in particular, we  identify an open dense subset of the configuration space where  system~\eqref{MODEL} is controllable.  We obtain a necessary and sufficient condition for this open dense subset to be path-connected. Next, we introduce the matrix Lie algebra $\bb{A}$ of zero row-sum matrices and show how to relate the graph closure of $G$ to the Lie algebraic closure of a naturally defined subspace of $\bb{A}$.  
In section 3, we  compute the Lie brackets of control vector fields and prove the controllability of system~\eqref{MODEL} by verifying the Lie algebra rank condition. We summarize and provide future directions in the last section.

\section{Preliminaries and statement of the main result}

\subsection{Digraphs and their strong component decompositions}\label{sec:prelimdig}
Let $G = (V,E)$ be a directed graph (or simply {\it digraph}) of $N$ vertices with $V = \{v_1,\ldots, v_N\}$ the set of vertices and $E$ the set of edges. We denote by $v_i\to v_j$ a directed edge in $G$  from $v_i$ to $v_j$.  We call a digraph $G$ {\bf weakly connected} if the undirected graph obtained by ignoring the orientation of the edges is connected~\cite{diestel_graph_2010}. The digraph $G$ is {\bf strongly connected} if for any pair of vertices $v_i$ and $v_j$, there is a path in $G$ from $v_i$ to $v_j$.  We say that $G_i = (V_i, E_i)$ is a subgraph of $G$ if $V_i \subset V$ and $E_i \subset E$. We call two subgraphs $G_i$ and $G_j$ {\bf disjoint} if $V_i \cap V_j = \emptyset$. Furthermore, we say that $G_i$ is {\bf induced by} $V_i$ if $E_i$ contains all edges in $E$ that connect vertices in $V_i$, i.e. $$E_i := \{v_k \to v_l\in E \mid v_k,v_l \in V_i\}.$$

\begin{Definition}[Strong component decomposition]\label{def:strongcompdecomp} We say that the subgraphs $G_i = (V_i, E_i)$, $1 \leq i \leq q$, form a {\bf strong component decomposition} of $G$ if 
\begin{enumerate}
\item Each subgraph $G_i$, for $1 \leq i \leq q$, is induced by $V_i$, and the $G_i$'s are   pairwise disjoint.
\item The $V_i$'s partition the vertex set $V$:  $\sqcup_{i=1}^q V_i = V$.

\end{enumerate}
\end{Definition}

We are interested in strong component decompositions with the least possible number of subgraphs. We call such a decomposition {\bf coarse}. The following lemma shows that there is a {\it unique} coarse strong component decomposition of a weakly connected digraph.

\begin{Lemma}\label{lem:DefSCD}
Let $G=(V,E)$ be a weakly connected digraph. There is a {unique} strong component decomposition (SCD) of smallest cardinality.
\end{Lemma}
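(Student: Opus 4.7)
The plan is to exhibit the canonical SCD as the one induced by the mutual-reachability equivalence relation, and then argue that every other SCD refines it (so the canonical one has the smallest cardinality, and anything achieving that cardinality must coincide with it). The weak connectivity hypothesis plays no essential role; it is inherited from the setting of the paper. I am reading Definition~1 with the (clearly intended, though not typeset) proviso that each subgraph $G_i$ in an SCD is strongly connected, since without it the lemma is false.

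\emph{Step 1: Set up the equivalence relation.} First I would define $v \sim w$ on $V$ to mean that there is a directed path in $G$ from $v$ to $w$ and one from $w$ to $v$, with $v \sim v$ via the trivial path. Reflexivity and symmetry are immediate; transitivity follows from concatenation of paths. Let $V_1, \dots, V_q$ be the equivalence classes and let $G_i$ be the subgraph of $G$ induced by $V_i$.

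\emph{Step 2: Check that $\{G_i\}_{i=1}^q$ is an SCD.} Pairwise disjointness of the $V_i$'s and the partition property hold by construction, and the $G_i$'s are induced by the $V_i$'s by definition. For strong connectivity of each $G_i$, I would observe that if $v,w \in V_i$ and $\pi$ is a path in $G$ from $v$ to $w$, then every intermediate vertex $u$ on $\pi$ satisfies $v \sim u$ (because we can reach $u$ via a prefix of $\pi$ and come back via the suffix of $\pi$ composed with a $w \to v$ path), so $u \in V_i$ and the path lies entirely in $G_i$.

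\emph{Step 3: Minimality and uniqueness.} Let $\{G'_j = (V'_j, E'_j)\}_{j=1}^{q'}$ be any SCD. Because each $G'_j$ is strongly connected, any two vertices of $V'_j$ are $\sim$-related in $G$, so $V'_j$ is contained in a single equivalence class $V_{i(j)}$. Hence the partition $\{V'_j\}$ refines $\{V_i\}$, which forces $q' \geq q$. If $q' = q$, the refinement is an equality of partitions, and since each $G'_j$ is the induced subgraph on $V'_j = V_{i(j)}$, we recover $\{G'_j\} = \{G_i\}$ up to relabeling. This simultaneously yields minimality ($q$ is optimal) and uniqueness of the coarse SCD.

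The only mildly delicate point is the internal claim in Step~2 that equivalence classes induce strongly connected subgraphs (not merely subgraphs whose vertices are mutually reachable \emph{through the whole of} $G$); all other verifications are formal. Everything else is bookkeeping around the equivalence relation.
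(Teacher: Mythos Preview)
Your proof is correct and follows the standard textbook route: construct the canonical SCD via the mutual-reachability equivalence relation, then show every SCD refines it. Your observation that Definition~\ref{def:strongcompdecomp} as typeset omits the requirement that each $G_i$ be strongly connected is accurate and worth flagging; the paper clearly intends it, and its own proof uses it.

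The paper takes a genuinely different approach. Rather than constructing the minimal SCD, it argues by contradiction: assuming two distinct minimal SCDs $\{G_i\}$ and $\{G'_i\}$ of the same cardinality $q$, it picks a block $V'_1$ not contained in any single $V_i$, collects all $V_i$'s meeting $V'_1$ into a set $V^*$, and shows directly that the subgraph induced by $V^*$ is strongly connected (by routing through $V'_1$). Replacing the collected blocks by $V^*$ yields an SCD of cardinality strictly less than $q$, a contradiction. Your approach is more constructive and economical---it hands you the coarse SCD explicitly and makes minimality transparent via the refinement order---whereas the paper's argument stays closer to the bare definition and never names the equivalence relation, at the cost of a slightly more delicate merging step. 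Both are short; yours is the cleaner of the two.
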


\begin{proof}
We prove Lemma~\ref{lem:DefSCD} by contradiction. Let $q$ be the minimal number of subsets in a SCD. Suppose that there are two distinct sets of subgraphs, $\{G_1, \ldots, G_q\}$ and $\{G'_1,\ldots, G'_q\}$, and that they both are SCDs of $G$ with  $q$ subsets. Then, after  a relabeling of the subgraphs, we can assume that $V_1 \nsubseteq V'_1 $ and $V'_1 \nsubseteq V_1$. Indeed, if this does not hold, then the two SCDs are identical. We now collect  sets $V_i$'s that intersect $V'_1$; define $S$ as follows: if $i\in S$, then $V_{i} \cap V'_1 \neq \emptyset$. Since the $V_i$'s are disjoint and $V'_1 \nsubseteq V_1$, we need at least two sets to cover $V'_1$. Hence, the cardinality of $S$ is at least two.

Now set $V^* := \sqcup_{i\in S}V_{i}$, and let $G^* = (V^*, E^*)$ be the digraph induced by $V^*$. We show below that $G^*$ is strongly connected. Note that if it is the case, then we can obtain a SCD of $G$ whose cardinality is strictly smaller than $q$: indeed, the subgraphs $G_j$'s, for $j\notin S$, together with $G^*$ form a SCD of $G$. Moreover, the number of the subsets of the SCD is strictly smaller than $q$ since the cardinality of $S$ is at least $2$. We thus derive a contradiction,  and hence, conclude that there is a unique SCD with smallest number of subsets.

We now show that $G^*$ is strongly connected. First, we show that for any $v_k\in V'_1$ and any $v_{l_i}\in V_{i}$, for $i\in S$, there is a path $\gamma_{kl_i}$ from $v_k$ to  $v_{l_i}$ and a path $\gamma_{l_ik}$ from $v_{l_i}$ to  $v_k$.  Pick a vertex $v_{k_{i}}\in V_{i} \cap V'_1$. Such a vertex exists by definition of $S$. Then, there is a path from $v_k$ to $v_{k_i}$ in $G'_1$ and a path from $v_{k_i}$ to $v_{l_i}$ in $G_i$. Using these two paths, we can obtain a path $\gamma_{kl_i}$ from $v_k$ to $v_{l_i}$. Using the same argument, we can also obtain a path $\gamma_{l_ik}$ from $v_{l_i}$ to $v_k$.  But then, since $G'_1$ and the $G_i$'s are strongly connected, we can use the paths $\gamma_{l_ik}$ and $\gamma_{kl_j}$ to obtain a path from $v_{l_i}$ to $v_{l_j}$. Using again the fact that the $G_i$'s are strongly connected, we obtain a path from any vertex in $G_i$ to any vertex in $G_j$ and vice-versa. Thus, we have shown that $G^*$ is strongly connected. 
\end{proof}

The coarse strong component decomposition of a weakly connected digraph $G$ induces an acyclic digraph,  with the vertices representing the components of $G$ and edges representing the flows between the components. Precisely, we have the following definition:


\begin{Definition}[Skeleton digraph]
Let $G$ be a weakly connected digraph, and let $G_1,\ldots, G_q$ form the coarse  
 strong component decomposition of $G$. Define an acyclic digraph  $H=(W,F)$ with $q$ vertices  as follows: there is an edge $w_i\to w_j$  in $H$ if and only if there is an edge $v_i\to v_j$ in $G$ with $v_i$ a vertex in $G_i$  and $v_j$  a vertex in $G_j$.  The acyclic digraph $H$ will be referred as the {\bf skeleton digraph} of $G$.  
\end{Definition}

\begin{figure}[h]
\begin{center}
\includegraphics[scale=.42]{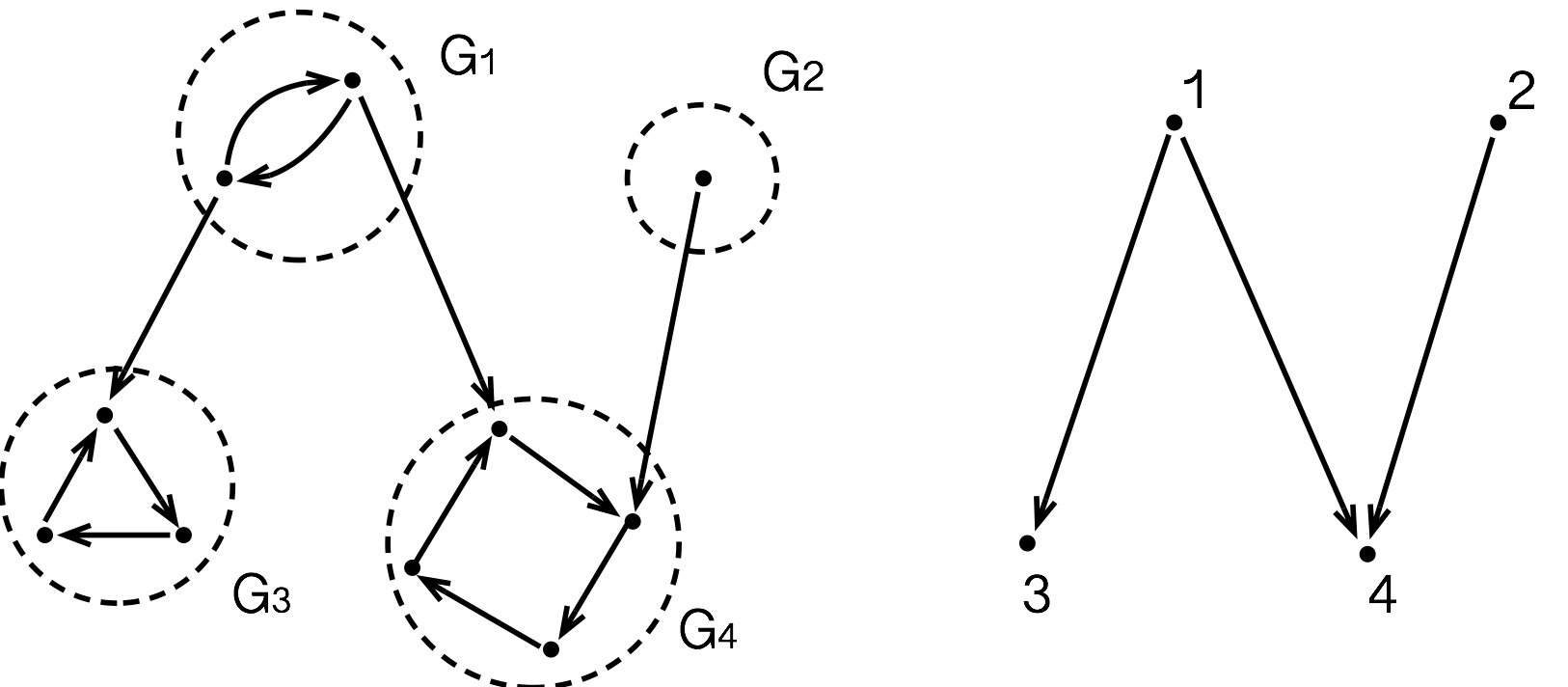}
\caption{By applying the coarse strong component decomposition to the weakly connected graph $G$ on the left, we get four strongly connected subgraphs as $G_1,\ldots,G_4$. The skeleton digraph $H$ of $G$ is  given on the right hand side of the figure. The maximal set $W_+$ of the skeleton of $G$ is given by $\{3,4\}$}
\label{Acyc}
\end{center}
\end{figure}

The digraph $H$ defines a partial order on its vertices: we say $w_j$ is greater than $w_i$, or simply $w_j \succ w_i$, if there is a path from $w_i$ to $w_j$ in $H$. We say a vertex $w_i$ of $H$ is {\bf maximal} for the partial order $\succ$  if there does not exist a vertex $w_j$ such that $w_j\succ w_i$. Denote by $W_+\subseteq W$ 
the set of maximal elements and  refer to it as the {\bf maximal set} of the skeleton of $G$. By definition, if $w_i\in W_+$, then it does not have any outgoing neighbors of $w_i$.  Also, we note that for any $w_i\in W - W_+$,  there is at least a  $w_j\in W_+$, together with a path from $w_i$ to $w_j$. 


\subsection{Configuration space}

Given a formation of $N$ agents in $\R^n$, with states $x_1, \ldots x_N$ respectively, we set $p = (x_1, \ldots, x_N)  \in \R^{nN}$. We call $p$ the {\bf configuration} of the system and $P:=\R^{nN}$ the {\bf configuration space} of the system.  Let $Q$ be a subset of $P$. We say that $Q$ is {\bf path-connected} if for any two configurations $p_0,p_1\in Q$, there is a continuous function $p(t):[0,1]\rightarrow P$ with $p(0)=p_0$ and $p(1)=p_1$ such that the  image of $p(t)$ lies in $Q$. We say that $Q$ is {\bf disconnected} if it is not path-connected.    We now define what it means for a system to be approximately path-controllable:

\begin{Definition}[Approximate path-controllability]
Let $Q$ be a path-connected set.  We say that system~\eqref{MODEL} is {\bf approximately path-controllable} over $Q$ if for any  $T>0$, any arbitrary continuous curve $\hat p:[0,T]\rightarrow Q$ and any tolerance $\epsilon >0$, there are  integrable functions $u_{ij}(p, t)$'s  such that the solution $ p (t)$ of system~\eqref{MODEL}, from any initial condition $p(0)$ with $\|p(0)-\hat p(0)\| < \epsilon$,  satisfies $$\|\hat p(t)- p(t)\|<\epsilon$$ for all $t\in [0,T]$.  
\end{Definition}
We denote by ${\bf u}$ the ensemble of controls $u_{ij}$'s, and let ${\bf u}[0,T]$ be the function ${\bf u}$ over the time interval $[0,T]$.
We now state the main theorem of this paper. 
\begin{theorem}\label{MTHM}
Let $G$ be a weakly connected digraph, and let $G=\{G_1,\ldots,G_q\}$ be the strong component decomposition of $G$. If for each  $w_i\in W_+$ we have  $$|G_i|>(n+1),$$ then system~\eqref{MODEL} is approximately path-controllable over a path-connected, open dense subset of $P$. 
\end{theorem}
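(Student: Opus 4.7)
The plan is to verify the Lie algebra rank condition (LARC) at points of a suitable open dense subset $U\subset P$, and then upgrade the resulting local accessibility to approximate path-tracking by a standard concatenation argument. First I would recast the system in matrix form: each edge $i\to j\in E$ yields a linear control vector field $f_{ij}(p)=A_{ij}p$, where $A_{ij}:=e_i(e_j-e_i)^\top$ is a zero-row-sum matrix. Stacking the states into $X\in\R^{N\times n}$, the dynamics reads $\dot X=\sum_{i\to j\in E}u_{ij}\,A_{ij}X$, and since linear vector fields satisfy $[A_{ij}X,A_{kl}X]=-[A_{ij},A_{kl}]X$, the Lie algebra of control vector fields acts on $X$ through a matrix Lie subalgebra $\cal{L}(G)\subseteq\bb{A}$ generated by $\{A_{ij}:i\to j\in E\}$. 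This reduces the controllability question to understanding how the combinatorial structure of $G$ determines $\cal{L}(G)$, which is precisely the \emph{graph closure} viewpoint anticipated in the excerpt.

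My second step would be to analyze $\cal{L}(G)$ block by block using the coarse strong component decomposition $\{G_1,\dots,G_q\}$. For a maximal component $G_i$ with $w_i\in W_+$, every out-neighbor of any vertex in $V_i$ lies in $V_i$, so the corresponding generators $A_{jk}$ close under bracketing and are supported on the $V_i\times V_i$ block. I would argue by induction on the bracket depth that, under the hypothesis $|V_i|>n+1$, the subalgebra they generate contains all elementary matrices $A_{jk}$ for arbitrary $j,k\in V_i$, and hence acts on a generic configuration of $|V_i|$ points in $\R^n$ with full-rank orbit modulo the translational symmetry of that block. For a non-maximal component $G_i$, the skeleton digraph guarantees an edge $j\to k$ with $j\in V_i$ and $k\in V_{i'}$ for some $i'\succ i$; brackets of these inter-component generators with intra-component ones supply the missing directions, and because the states of the components above $G_i$ can already be driven freely by the previous step, one can control the $V_i$-block even when $|V_i|\le n+1$.

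Once LARC is established, I would identify $U$ as the semi-algebraic subset of $P$ defined by a finite list of polynomial non-degeneracy conditions---essentially, that the agents inside each maximal component be in sufficiently general affine position in $\R^n$. Since the complement is a finite union of proper algebraic subvarieties, $U$ is automatically open and dense; path-connectedness is a separate check via a real-codimension-two argument, which the excerpt already highlights as a distinct issue and for which the numerical assumptions on the $|V_i|$ reappear. Finally, to convert pointwise LARC into approximate path-controllability along a continuous curve $\hat p\colon[0,T]\to U$, I would partition $[0,T]$ into sufficiently fine subintervals, apply Chow's theorem on each subinterval to construct integrable controls steering the system to within $\varepsilon$ of the next waypoint, and concatenate; compactness of the image of $\hat p$ in $U$ yields a uniform tolerance.

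The main obstacle by a wide margin is the Lie algebra computation in the second paragraph. Producing an explicit spanning family inside $\cal{L}(G)$ from the generators $A_{ij}$ is a delicate combinatorial induction on the digraph, and translating it into the geometric statement that the pointwise span $\cal{L}(G)\cdot X$ fills the tangent space of the relevant orbit at a generic $X$ requires showing that the only obstruction is the translational symmetry. The role of the threshold $|V_i|>n+1$ rather than the weaker $|V_i|>n$---which would only guarantee that a generic configuration spans $\R^n$---is precisely to provide the one extra agent needed to decouple the algebraic span of $\cal{L}(G)$ from the accidental rank-deficiency that arises when it acts on an $n$-column matrix $X$.
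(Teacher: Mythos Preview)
Your overall strategy---pass to the matrix Lie algebra of zero-row-sum generators $A_{ij}$, identify its closure with $\mathbb A_{\overline G}$ via the transitive closure of $G$, verify LARC on the open set where each maximal sub-configuration is non-degenerate, check path-connectedness by a codimension-two argument, and then invoke Chow together with Sussmann--Liu---is exactly the route the paper takes.

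Two points in your write-up are off, though, and the first one matters. There is \emph{no} translational symmetry to quotient by here: each $A_{ij}$ is zero-row-sum ($A_{ij}\mathbf 1=0$) but not zero-column-sum, so the centroid $\tfrac1N\mathbf 1^\top X$ is not conserved along trajectories. The paper in fact proves $\dim\mathcal L_p=nN$ at every $p\in Q$, i.e.\ full rank in $P$ with nothing modded out; your phrase ``full-rank orbit modulo the translational symmetry of that block'' would, if taken literally, yield only $n(N_i-1)$ directions per maximal block and would not establish the theorem as stated. (You may be importing intuition from the undirected case, where the centroid \emph{is} conserved and the target space is the fixed-centroid slice.) Second, your explanation of the threshold $|V_i|>n+1$ is misplaced. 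The identity $\overline{\mathbb A_G}=\mathbb A_{\overline G}$ is purely graph-theoretic and needs no size hypothesis, and LARC already holds as soon as each maximal sub-configuration is non-degenerate, which only requires $|V_i|\ge n+1$. The strict inequality is used solely for path-connectedness of $Q$: when $|V_i|=n+1$ the non-degenerate sub-configurations split into two components according to the sign of $\det(x_2-x_1,\dots,x_{n+1}-x_1)$, whereas $|V_i|>n+1$ forces the degenerate locus to have codimension $\ge N_i-n\ge 2$. So the ``one extra agent'' is a topological requirement on $Q$, not an algebraic one on $\mathcal L(G)$.
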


The path-controllability of system \eqref{MODEL} is established below by verifying the Lie algebra rank condition of the control vector fields. Precisely, we show that the Lie algebra rank condition is satisfied over a path-connected, open dense subset of $P$ as long as the graph $G$ satisfies the assumption of Theorem~\ref{MTHM}. The same proof technique can actually be used to handle time-varying graphs. Let $G(t) = (V,E(t))$ be a right-continuous, time-varying graph. We call a {\bf switching time} a time $t_i$ such that $\lim_{t \to t_i, t<t_i}G(t) \neq G(t_i)$.  We obtain as a corollary of Theorem~\ref{MTHM} the following result:

\begin{cor}\label{COROLLARY} 
Let $G(t)$ be a right-continuous time-varying graph such that for any finite time interval, $G(t)$ has a finite number of switching times. Suppose that for each $t\ge 0$, the graph $G(t)$ satisfies the assumption of Theorem~\ref{MTHM}. Then, the formation control system \eqref{MODEL} is approximately path-controllable  over a path-connected, open dense subset of $P$.
\end{cor}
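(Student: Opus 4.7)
The plan is to reduce to the fixed-graph case of Theorem~\ref{MTHM} by noting that, although $G(t)$ changes with time, only finitely many distinct graphs appear. First I would observe that the set of directed graphs on the vertex set $V$ is finite, so the family $\{G(t) : t \ge 0\}$ realizes only finitely many values $G^{(1)}, \ldots, G^{(m)}$. Each $G^{(i)}$ is weakly connected and meets the size bound on its maximal strongly connected components, so Theorem~\ref{MTHM} provides for each $i$ a path-connected open dense set $Q_i \subset P$ over which system \eqref{MODEL} with constant graph $G^{(i)}$ is approximately path-controllable.

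Next I would take $Q := \bigcap_{i=1}^{m} Q_i$ and verify that $Q$ is open, dense, and path-connected. Openness and density follow immediately for a finite intersection. For path-connectedness I would rely on the description of $Q_i$ emerging from the proof of Theorem~\ref{MTHM}: the complement $P \setminus Q_i$ is a real-algebraic subvariety of $P$, and under the hypothesis of Theorem~\ref{MTHM} it has codimension at least two (this is precisely what makes each individual $Q_i$ path-connected). The finite union $\bigcup_{i=1}^{m}(P \setminus Q_i)$ therefore has codimension at least two in $\R^{nN}$, so its complement $Q$ remains path-connected.

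Given $T>0$, a continuous target $\hat p:[0,T] \to Q$, and a tolerance $\epsilon>0$, compactness of $\hat p([0,T])$ inside the open set $Q$ produces a tube width $\delta > 0$ such that the closed $\delta$-neighborhood of the image lies in $Q$; I then replace $\epsilon$ by $\min(\epsilon,\delta)$. By hypothesis there are only finitely many switching times on $[0,T]$, which, together with right-continuity of $G(t)$, give a partition $0 = t_0 < t_1 < \cdots < t_k = T$ on each half-open subinterval $[t_{j-1}, t_j)$ of which $G(t)$ is constantly equal to some $G^{(i_j)}$. I would then proceed inductively on $j$: supposing $\|p(t_{j-1}) - \hat p(t_{j-1})\| < \epsilon$, invoke Theorem~\ref{MTHM} applied to the constant graph $G^{(i_j)}$ and the restricted target $\hat p|_{[t_{j-1},t_j]}$ to obtain integrable controls $\mathbf{u}_j$ whose trajectory stays within $\epsilon$ of $\hat p$ on $[t_{j-1}, t_j]$. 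Concatenation of the $\mathbf{u}_j$'s yields a single integrable control on $[0,T]$ with the required tracking property.

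The step I expect to be delicate is the verification that $Q$ is still path-connected, since a finite intersection of path-connected open dense sets need not be path-connected in general. This is where the codimension-$\ge 2$ structure of the sets $P \setminus Q_i$ supplied by the proof of Theorem~\ref{MTHM} is essential; without it, path-connectedness of $Q$ would require a separate argument. The remaining ingredients---finiteness of the realized graph family, partitioning at the switching times, and concatenating finitely many integrable controls---amount to routine bookkeeping once the joint good set $Q$ is in hand.
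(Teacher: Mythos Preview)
Your argument is correct and follows the same core strategy as the paper: partition $[0,T]$ at the switching times and concatenate controls obtained from Theorem~\ref{MTHM} on each constant-graph subinterval. The inductive step you describe matches the paper's proof essentially line for line.

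Where you go beyond the paper is in the treatment of the set $Q$. The paper simply writes ``Let $\hat p:[0,T]\to Q$'' without ever specifying which $Q$ works simultaneously for all the graphs $G(t_j)$, nor does it verify that such a common set is path-connected. You fill this gap: the observation that only finitely many graphs on $V$ can occur, the definition $Q=\bigcap_i Q_i$, and the appeal to the codimension-$\ge 2$ structure of each $P\setminus Q_i$ (which is exactly what Proposition~\ref{LPCOD} establishes) to conclude that $Q$ is still path-connected. This is the right way to make the statement precise, and your identification of it as the delicate point is well placed. The additional $\delta$-tube refinement you insert is not strictly necessary for the argument as stated, but it does no harm.
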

\begin{proof}

Let $\hat p:[0,T]\rightarrow Q$ be the path we want the system to follow and let $t_1, \ldots, t_m$ be the switching times of $G(t)$. We construct an admissible ${\bf u}[0,T]$ as follows. Given a graph $G(0)$ and an initial configuration $p(0)$ that satisfies $\| p(0)-\hat p(0)\|<\epsilon$, we know from Theorem~\ref{MTHM} that there exists ${\bf u}_1[0,T]$ such that system $(1)$ approximates $\hat p$ over $[0,T]$. We use this control until the first switching time:  ${\bf u}[0,t_1)={\bf u}_1[0,t_1)$. It  follows that   $\|p(t_1)-\hat p(t_1)\|<\epsilon$. We can thus apply Theorem~\ref{MTHM} but with graph $G(t_1)$ to obtain a control law ${\bf u}_2[t_1,T]$ that steers the system from $p(t_1)$ along a trajectory $p(t)$ such that $\|p(t)-\hat p(t)\|<\epsilon$. As before, we let ${\bf u}[t_1,t_2)={\bf u}_2[t_1,t_2)$. Note that implementing the control ${\bf u}$ over the time interval $[0,t_2)$ yields a trajectory $p(t)$ within $\epsilon$ tolerance of $\hat p$ over that interval. Repeating this procedure a finite number of times yields a control $\bf u$ that approximates $\hat p$ as required.\end{proof}



\subsection{Non-degenerate configurations}\label{sec:nondeg}

Let $p$ be a configuration of $m$ agents $x_1,\ldots, x_m$ in $\mathbb{R}^n$. We say that $r_p$ is the {\bf rank} of $p$, if there is no affine subspace of dimension $(r_p-1)$ that contains $p$. Equivalently, it is the dimension of the linear span of $\{x_i - x_1,\ldots, x_i - x_m\}$ for some (and hence, any) $i = 1,\ldots, m$.

\begin{Definition}
[Non-degenerate configuration]
We say that the configuration $p=(x_1,\ldots,x_m) \in \R^{nm}$, $x_i \in \R^n$,  is {\bf non-degenerate} in {$\R^n$} if there is no \emph{proper affine subspace} of $\R^n$ containing $x_1, \ldots, x_m$. Equivalently, $p$ is non-degenerate if $p$ is of full rank, i.e., the linear span of the vectors $\{x_i - x_1,\ldots, x_i - x_m\}$ is $\R^n$.

\end{Definition}
We note that if $p$ is non-degenerate, then the number of agents has to be strictly greater than $n$. 
If $G=(V,E)$ is a digraph with $N$ vertices, a configuration $p\in P$ can be viewed as an embedding of the graph $G$ in $\R^n$ by assigning vertex $v_i$ to $x_i$. We call the pair $(G,p)$ a {\bf framework}. Let $(G, p)$ be a framework with $G$ weakly connected, and let $G_1, \ldots, G_k$ form the coarse strong component decomposition of $G$.  We denote by $(G_i,p_i)$, with $p_i\in \R^{n |G_i|}$,  the framework obtained from $(G,p)$ by only considering vertices and edges of $G_i$. We refer to $p_i$ the {\bf sub-configuration} associated with $G_i$, and similarly denote by $r_{p_i}$ the rank of $p_i$.  
Let $Q$ be the  subset of $P$ defined as follows:
\begin{equation}\label{PCOD}
Q:=\left \{p\in P \mid r_{p_i} = n, \forall \,w_i\in W_+\right\}
\end{equation}
where we recall that $W_+$ is the maximal set of the skeleton of $G$.  
It should be clear that $Q$ is an open subset of $P$. We now show that $Q$ is also dense in $P$, and even path-connected under some mild assumptions.

\begin{pro}\label{LPCOD}
Under the assumption of Theorem~\ref{MTHM}, the set $Q$, defined in~\eqref{PCOD}, is an open dense, path-connected subset of $P$. 
\end{pro}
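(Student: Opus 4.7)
The plan is to exploit the product structure of $P$ induced by the coarse strong component decomposition $G_1, \ldots, G_q$. Grouping coordinates by component, one has $P \cong \prod_{i=1}^q \R^{n|G_i|}$, and the defining condition of $Q$ involves only the sub-configurations $p_i$ for $w_i \in W_+$. Hence $Q = \prod_{i=1}^q Q_i$, where $Q_i := \{p_i \in \R^{n|G_i|} : r_{p_i} = n\}$ if $w_i \in W_+$ and $Q_i := \R^{n|G_i|}$ otherwise. Since a product of open, dense, path-connected sets is open, dense, and path-connected, it suffices to verify each property for a single factor $Q_i$ corresponding to $w_i \in W_+$.

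For openness and density, the condition $r_{p_i} = n$ is equivalent to the difference matrix $M_i := [x_2 - x_1, \ldots, x_{|G_i|} - x_1] \in \R^{n \times (|G_i|-1)}$ having rank $n$, i.e., some $n \times n$ minor of $M_i$ being nonzero. This is an open polynomial inequality, and its failure defines a proper real algebraic subvariety $Z_i \subset \R^{n|G_i|}$ cut out by the simultaneous vanishing of all $n \times n$ minors of $M_i$. Thus $Q_i = \R^{n|G_i|} \setminus Z_i$ is open, and it is dense since a proper algebraic subvariety of $\R^{n|G_i|}$ has empty interior.

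For path-connectedness, the hypothesis $|G_i| > n+1$ is used in an essential way. The determinantal variety of $n \times (|G_i|-1)$ matrices of rank at most $n-1$ has codimension $|G_i| - n$; one can see this directly by parameterizing a degenerate configuration as the data of an affine hyperplane of $\R^n$ together with the positions of $|G_i|$ points inside it, giving total dimension $n + (n-1)|G_i| = n|G_i| - (|G_i| - n)$. The hypothesis then yields real codimension at least two, and a standard transversality argument shows that $\R^{n|G_i|} \setminus Z_i$ is path-connected: any two points can be joined by a piecewise-linear curve, and after an arbitrarily small generic perturbation this curve is transverse to each smooth stratum of $Z_i$ and hence, by the codimension bound, disjoint from $Z_i$.

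The main obstacle is justifying this last path-connectedness step and explaining why the strict inequality $|G_i| > n+1$ is needed. When $|G_i| = n+1$, the variety $Z_i$ is cut out by the single polynomial $\det M_i$ and is therefore a real hypersurface separating $\R^{n(n+1)}$ into the two orientation classes $\det M_i > 0$ and $\det M_i < 0$, so $Q_i$ would be disconnected. The codimension computation thus pinpoints exactly where density (codimension $\geq 1$) no longer suffices for path-connectedness, and clarifies the geometric content of the hypothesis of Theorem~\ref{MTHM}.
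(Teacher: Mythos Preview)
Your proof is correct and takes essentially the same approach as the paper: openness and density from the complement of a proper algebraic set, and path-connectedness from the codimension-at-least-two criterion for the degenerate locus. The paper packages the argument slightly differently---writing $Q$ as an intersection $\bigcap_{w_i\in W_+}(P^n_i\times P_{-i})$ rather than as a product $\prod_i Q_i$, and establishing the codimension bound via two dedicated lemmas (Lemmas~\ref{pksm} and~\ref{codcom}) on the rank stratification $P^k$ rather than your direct determinantal-variety/parameterization count---but the substance is the same.
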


We can decompose the set $P$ into disjoint components as follows. Fix an integer $k \geq 0$ and define the set
\begin{equation*}
P^k := \{p\in P\mid r_p = k\}.
\end{equation*}
It is easy to see that each $P^k$ is nonempty, for $k = 0,\ldots,n$. Note that the set $P^0,\ldots,P^n$ are pairwise disjoint, and they form a decomposition of $P$:
\begin{equation*}
P = \sqcup^n_{k=0} P^k.
\end{equation*}
Furthermore, for each $k=1,\ldots, n$,  we have
\begin{equation}\label{eq:diffpkpk}
\overline {P^{k}} - P^{k} = \sqcup^{k-1}_{l=0}P^{l}
\end{equation}
where $\overline{P^{k}}$ is the closure of $P^k$ in $\R^{nN}$. 
For $k=0$, we have $$\overline{P^{0}} = P^{0} = \{ p=(x_1,\ldots,x_n) \in P \mid x_1=\cdots=x_n \}.$$ 
Note that $P^n$ is the set of non-degenerate configurations. Since $P^n$ is an open set in $P$, it is a smooth submanifold of $P$. Also, from~\eqref{eq:diffpkpk}, the set $P^n$ is dense in $P$, i.e., 
\begin{equation}\label{eq:clospn}
\ol{P^n} = P. 
\end{equation}
The union of the other sets $P^0,\ldots,P^{n-1}$ is the set of degenerate configurations. Define
\begin{equation}\label{Eq:dimofPk}
d_k :=  -k^2 + k(N+n-1) + n. 
\end{equation}
We now have the following fact:


\begin{lem}\label{pksm}
Each $P^k$ is a smooth submanifold of $P$, and $\dim P^k = d_k$. 
\end{lem}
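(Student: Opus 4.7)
The plan is to realize each $P^k$ as the preimage of a well-known rank stratum under a linear submersion, so that the standard stratification of matrices by rank transfers directly to our setting.

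First, I would introduce the linear map $\phi : P \to \R^{n(N-1)}$ defined by
$$\phi(x_1, \ldots, x_N) := (x_2 - x_1, \, x_3 - x_1, \, \ldots, \, x_N - x_1),$$
which I identify with the $n \times (N-1)$ matrix having these vectors as columns. By the definition of the rank of a configuration, $r_p$ equals the rank of the matrix $\phi(p)$, so $P^k = \phi^{-1}(M_k)$, where $M_k$ denotes the locus of rank-$k$ matrices in $\R^{n(N-1)}$. Since $\phi$ is linear and surjective, it is a submersion, with fibers the $n$-dimensional diagonal translations $\{(v,v,\ldots,v) : v\in \R^n\}$.

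Next, I would invoke the classical fact that $M_k$ is a smooth submanifold of $\R^{n(N-1)}$ of dimension $k(n + N - 1 - k)$; this can be established either by covering $M_k$ with charts indexed by a choice of invertible $k \times k$ minor (and using the Schur complement to characterize the vanishing of the other minors), or by presenting $M_k$ as an orbit of the $\GL(n) \times \GL(N-1)$ action on the space of matrices. Pulling back through the submersion $\phi$ then yields that $P^k$ is a smooth submanifold of $P$ with
$$\dim P^k = \dim M_k + n = k(n+N-1-k) + n = d_k,$$
matching~\eqref{Eq:dimofPk}.

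I do not anticipate a serious obstacle here. The only subtle point to verify is that $\phi$ faithfully captures the rank of $p$ independently of the distinguished choice of basepoint $x_1$; this is immediate from the equivalent formulation $r_p = \dim \span\{x_i - x_j : 1 \leq i,j \leq N\}$. As consistency checks, $d_0 = n$ agrees with $P^0$ being the diagonal copy of $\R^n$ inside $P$, and $d_n = nN$ agrees with $P^n$ being open in $P$ by~\eqref{eq:clospn}.
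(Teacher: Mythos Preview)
Your proposal is correct but takes a genuinely different route from the paper. The paper argues locally and by hand: at a point $p\in P^k$ it selects $k+1$ agents whose differences span the affine hull, builds from them a smoothly varying invertible matrix $L_{p'_1}=(A_{p'_1},B_{p'_1})^\top$, and uses $L_{p'_1}$ to define an explicit diffeomorphism $f$ of a neighborhood of $p$ that straightens $P^k$ onto a coordinate subspace $\R^{d_k}$; the count of free versus constrained entries then gives $d_k$. You instead work globally: the affine-difference map $\phi$ is a linear surjection onto the matrix space $\R^{n\times(N-1)}$, hence a submersion, and $P^k=\phi^{-1}(M_k)$ for the classical rank-$k$ stratum $M_k$; pulling back the well-known dimension $\dim M_k=k(n+N-1-k)$ and adding the $n$-dimensional fiber yields $d_k$.

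Your argument is shorter and more conceptual, and it makes the codimension $(n-k)(N-1-k)$ visible at a glance; it does, however, lean on the rank-stratification result as a black box. The paper's chart construction is fully self-contained and produces explicit local coordinates on $P^k$, at the cost of being considerably longer. Both arrive at the same formula, and your consistency checks at $k=0$ and $k=n$ are appropriate.
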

  
We refer to the Appendix for a complete proof of Lemma~\ref{pksm}.   
For each $k = 0,\ldots, n$, the codimension of $P^k$ in $P$ is by convention defined to be:
$$
\codim P^k := \dim P - \dim P^k.
$$   
We  now establish the following inequalities: 

\begin{lem}\label{codcom}
Suppose that $N > n$. Then,  $$\codim P^k \ge N - n$$ 
for all $k = 0,\ldots, n-1$. 
\end{lem}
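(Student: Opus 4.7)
The plan is to reduce the inequality to an elementary fact about a quadratic polynomial in $k$ using the explicit dimension formula from Lemma~\ref{pksm}. Since $\dim P = nN$ and $\dim P^k = d_k = -k^2 + k(N+n-1) + n$, one has
\[
\codim P^k - (N-n) = nN - d_k - (N-n) = k^2 - k(N+n-1) + N(n-1).
\]
So the claim is equivalent to showing that the polynomial $f(k) := k^2 - k(N+n-1) + N(n-1)$ is non-negative for every integer $k \in \{0,1,\ldots,n-1\}$.

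Next, I would factor $f$. By Vieta's formulas, the roots of $f$ sum to $N+n-1$ and multiply to $N(n-1)$, which are exactly the sum and product of $n-1$ and $N$. Hence
\[
f(k) = (k-(n-1))(k-N).
\]
For any $k$ in the range $0 \le k \le n-1$, the first factor satisfies $k-(n-1) \le 0$, and since $N > n$ implies $N > n-1 \ge k$, the second factor satisfies $k-N < 0$. The product of two non-positive numbers is non-negative, so $f(k) \ge 0$ on the required range, and therefore $\codim P^k \ge N-n$.

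There is no real obstacle here; the step that might look non-obvious is spotting the factorization, but it is forced by the roots being $n-1$ and $N$ (as one immediately checks by plugging in). As a sanity check, equality holds precisely at $k = n-1$, giving $\codim P^{n-1} = N-n$, which agrees with the fact that $P^{n-1}$ parametrizes configurations spanning a hyperplane of $\R^n$ and is therefore exactly $N-n$ "constraints" away from being full-dimensional.
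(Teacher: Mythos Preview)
Your proof is correct and follows essentially the same approach as the paper: both reduce the inequality to an elementary analysis of the quadratic $d_k$ from Lemma~\ref{pksm}. The only cosmetic difference is that the paper argues via monotonicity of $d_k$ on $\{0,\ldots,n-1\}$ (noting its vertex lies at $(N+n-1)/2 \ge n$) and then evaluates $\codim P^{n-1} = N-n$, whereas you factor $\codim P^k - (N-n) = (k-(n-1))(k-N)$ directly; both arrive at the same conclusion with equality at $k=n-1$.
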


\begin{proof}
We prove the result by evaluating the value of $d_k$. On one-hand, from~\eqref{Eq:dimofPk}, we have that $d_k$ is a quadratic function in the variable $k$, and achieves its maximum at $(N + n -1)/2$. Since $N > n$, we have
\begin{equation*}
 \frac{1}{2}(N + n - 1) \geq n, 
\end{equation*}
and thus $d_k$ is a strictly monotonically increasing function in $k$ for $k = 0,\ldots, n-1$, from which we conclude  that 
$$
d_0< \ldots < d_{n-1}.
$$
On the other hand, we have that
\begin{equation*}
d_{n-1} = nN - N + n.
\end{equation*}
Thus, the codimension of $P^k$ satisfies
\begin{equation*}
\codim P^k \ge \codim P^{n-1}= N - n.
\end{equation*}
for all $k = 0,\ldots, n-1$.
\end{proof}

Let $G_i=(V_i,E_i)$, for $i =1,\ldots,q$, form the coarse strong component decomposition of $G$. Let  
$
V_{-i} :=  V - V_i
$, 
and $G_{-i}$ be the subgraph of $G$ induced by $V_{-i}$. Let  $P_i$ and $P_{-i}$ be the sets of sub-configurations associated with $G_i$ and $G_{-i}$, respectively. Let $N_i$ be the number of vertices of $G_i$. We have that $\dim P_i = n\times N_i$ and $\dim P_{-i}=n\times(N-N_i)$. Similarly, we define $P^k_i$ to be the set of rank-$k$ sub-configurations associated with $G_i$. 
We are now in the position to prove Proposition~\ref{LPCOD}.

\begin{proof}[Proof of Proposition~\ref{LPCOD}]
Recall that $W_+$, defined in section~\ref{sec:prelimdig}, is the maximal set of the skeleton of $G$. We first show that $Q$ is an open dense subset of $P$, and then show that $Q$ is path-connected. 
Following the definition of $Q$ (in~\eqref{PCOD}), we can write 
\begin{equation}\label{exprQ}
Q = \bigcap_{w_i\in W_+} \left (P^n_i \times P_{-i} \right )
\end{equation}
From the assumption of Theorem~\ref{MTHM}, we have $N_i - n >1$ for all $w_i\in W_+$. Then, each $P^n_i$ is a nonempty open set in $P_i$, and from~\eqref{eq:clospn}, we have that $\ol{P^n_i} = P_i$ for all $w_i\in W_+$. So then, each $P^n_i \times P_{-i}$ is an open dense subset of $P$, and so is $Q$. This holds  because a finite intersection of open dense subsets is still open and dense.

We now show that $Q$ is path-connected. For each $w_i\in W_+$, define $\widetilde P_i$ to be the set of degenerate configurations in $P_i$. Then, $\widetilde P_i$ can be expressed as follows: 
\begin{equation*}
\widetilde P_i = \bigsqcup^{n-1}_{k=0} P^k_i. 
\end{equation*}
Since $P_i = \widetilde P_i \sqcup P^n_i$, from~\eqref{exprQ}, we can express $Q$ as follows:
\begin{equation*}
Q = P - \bigcup_{w_i\in W_+}\left (\widetilde P_i \times P_{-i}\right ). 
\end{equation*}   
We now show that  each $\widetilde P_i\times P_{-i}$, for $w_i\in W_+$, is a finite union of smooth submanifolds of $P$ and that the codimension of each submanifold  is strictly greater than $1$.   
First, note that 
$$
\widetilde P_i\times P_{-i} = \bigsqcup^{n-1}_{k=0} \left(P^k_i \times P_{-i} \right). 
$$ 
By Lemma~\ref{pksm}, each $P_i^k$ is a submanifold of $P_i$. Futhermore, by Lemma~\ref{codcom}, the codimension of $P^k_i$ in $P_i$ satisfies 
$$
\codim P^k_i:= \dim P_i - \dim P^k_i \ge N_i - n  
$$
for all $k = 0,\ldots, n-1$.  Using the fact that $P = P_{i}\times P_{-i}$, we have  
$$
\codim (P^k_i\times P_{-i}) = \codim P^k_i, 
$$
and hence, 
$$
\codim (P^k_i\times P_{-i}) \ge (N_i - n) > 1. 
$$
We have thus proved that each $\widetilde P_i\times P_{-i}$, for $w_i\in W_+$, is a finite union of smooth submanifolds of $P$, and the codimension of each submanifold is strictly greater than $1$. 
Since removing from a Euclidean space a finite union of smooth submanifolds of codimensions at least two does not render it disconnected, the result is proved.
\end{proof}

Proposition~\ref{LPCOD} shows that the assumption of Theorem~\ref{MTHM} is sufficient for $Q$ to be a nonempty path-connected, open dense subset.  We show below that it is  also necessary for $Q$. Recall that $P^k_i$ is the set of rank-$k$ sub-configurations associated with $G_i$, and in particular, $P^n_i$ is the set of non-degenerate sub-configurations associated with $G_i$. We first establish the following fact.

\begin{pro}\label{pro:disconnect}
Let $W_+$ be the maximal set of the skeleton of $G$.  Suppose that there exists a $w_i\in W_+$ such that $N_i - n\le 1$. Then, the following two results hold: 
\begin{itemize}
\item[1.] If $N_i - n\le 0$, then the set $Q$ is empty. 
\item[2.] If $N_i - n = 1$, then $P^n_i$ has two connected components. 
\end{itemize}
\end{pro}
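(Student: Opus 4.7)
My plan is to treat the two parts separately, using the expression $Q = \bigcap_{w_i \in W_+}(P^n_i \times P_{-i})$ from the proof of Proposition~\ref{LPCOD} for part~1, and a direct parametrization of $P^n_i$ for part~2.

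For part~1, I would simply observe that a non-degenerate configuration of $N_i$ agents in $\R^n$ requires the affine span of the points to be all of $\R^n$, which in turn requires at least $n+1$ points. Thus if $N_i \le n$, no sub-configuration $p_i \in P_i$ can have rank $n$, so $P^n_i = \emptyset$. From the intersection expression for $Q$ it follows immediately that $Q = \emptyset$.

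For part~2, with $N_i = n+1$, the strategy is to exhibit $P^n_i$ as diffeomorphic to $\R^n \times \GL(n,\R)$ and then invoke the standard fact that $\GL(n,\R)$ has exactly two connected components, distinguished by the sign of the determinant. Concretely, I would define the map
\begin{equation*}
\Phi : P^n_i \longrightarrow \R^n \times \GL(n,\R), \quad (x_1,\ldots,x_{n+1}) \longmapsto \bigl(x_1,\, M(p_i)\bigr),
\end{equation*}
where $M(p_i)$ is the $n\times n$ matrix whose $k$-th column is $x_{k+1}-x_1$. Non-degeneracy of $p_i$ is precisely the condition that the columns of $M(p_i)$ span $\R^n$, i.e.\ that $M(p_i)\in \GL(n,\R)$; conversely every element of $\R^n\times\GL(n,\R)$ comes from a unique such configuration, so $\Phi$ is a diffeomorphism. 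Since $\R^n$ is connected and $\GL(n,\R)$ has two connected components (given by $\det>0$ and $\det<0$), the product, and hence $P^n_i$, has exactly two connected components.

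The only subtle point is checking that the two components of $P^n_i$ are not artifacts of the choice of base vertex $x_1$, but this is immediate: any permutation of the agents acts on $M(p_i)$ by an invertible linear transformation of fixed sign, so the partition of $P^n_i$ into $\{\det M > 0\}$ and $\{\det M < 0\}$ is intrinsic up to swapping the two components. I do not expect any serious obstacle; the main task is simply to set up the parametrization $\Phi$ cleanly and quote the component structure of $\GL(n,\R)$.
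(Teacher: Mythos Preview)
Your proposal is correct and follows essentially the same route as the paper: part~1 is identical, and for part~2 the paper also forms the matrix $A_{p_i}=(x_2-x_1,\ldots,x_{n+1}-x_1)$ and uses the two-component structure of $\GL(n)$. The only cosmetic difference is that the paper records the map $p_i\mapsto A_{p_i}$ as an open surjection $P^n_i\to\GL(n)$ with connected fibers $\{x_1\in\R^n\}$, whereas you package the same data as a diffeomorphism $\Phi:P^n_i\to\R^n\times\GL(n,\R)$; your added remark on base-point independence is not in the paper but is harmless.
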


\begin{proof}
First, observe that if a configuration is non-degenerate in $\bb{R}^n$,  it contains at least $(n+1)$ agents. Thus,  if $N_i \le n$, then there does not exist a non-degenerate sub-configuration associated with $G_i$, and hence,  $Q$ is empty.

We now assume that $N_i - n = 1$, and prove that $P^n_i$ has two connected components. Without loss of generality, we assume that $p_i$ is formed by agents $x_1,\ldots, x_{n+1}$. For each $p_i\in P^n_i$, define a matrix as follows: 
\begin{equation*}
A_{p_i} := (x_2 -x_1,\ldots, x_{n+1} - x_1)\in \R^{n\times n}.
\end{equation*}
Since $p_i$ is non-degenerate, $A_{p_i}$ is invertible. Let $\GL(n)$ be the set of $n$-by-$n$ invertible matrices, and denote by 
$$
f: P^n_{i} \longrightarrow \GL(n)
$$
the smooth map sending $p_i$ to $A_{p_i}$. Note that  $f$ is surjective and open; indeed,  for a matrix $A\in \GL(n)$, with column vectors $a_1,\ldots, a_n\in\bb{R}^n$, we have
$$
f^{-1}(A) = \{(x_1, x_1 + a_1,\ldots, x_1 + a_n) \mid x_1\in\bb{R}^n \}. 
$$   
Thus, we have 
\begin{equation}\label{eq:pnifinvgln}
P^n_i = f^{-1}\GL(n). 
\end{equation}
It is well known that $\GL(n)$ has two connected components:  the matrices with positive determinant and the ones with negative determinant. Thus, following~\eqref{eq:pnifinvgln}, we conclude that  $P^n_i$ has two connected components: the $p_i$'s with $\det(A_{p_i}) > 0$, and the ones with $\det(A_{p_i}) < 0$. This completes the proof. 
\end{proof}

Following Proposition~\ref{pro:disconnect}, we have the following corollary:
\begin{cor}
Let $W_+$ be the maximal set of the skeleton of $G$, and $Q$ be defined in~\eqref{PCOD}.  Suppose that there exists a $w_i\in W_+$ such that $N_i - n\le 1$. Then, $Q$ is disconnected. 
\end{cor}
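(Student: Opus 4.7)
The plan is to deduce this corollary directly from Proposition~\ref{pro:disconnect}. The argument splits into two cases depending on whether $N_i \le n$ or $N_i = n + 1$. In the former case, Proposition~\ref{pro:disconnect}(1) immediately gives $Q = \emptyset$, which one regards as disconnected by convention. The remainder of the plan handles the main case $N_i = n + 1$: I would exhibit a continuous, locally constant, $\{\pm 1\}$-valued function on $Q$ that takes both values, thereby obstructing path-connectedness.

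Following the notation of Proposition~\ref{pro:disconnect}, assume $G_i$ has agents $x_1, \ldots, x_{n+1}$ and set $A_{p_i} := (x_2 - x_1, \ldots, x_{n+1} - x_1) \in \R^{n \times n}$. Define
\[ \phi \colon Q \to \R, \qquad \phi(p) := \det A_{p_i}. \]
Since every $p \in Q$ has $p_i \in P^n_i$, the matrix $A_{p_i}$ is invertible, so $\phi$ is continuous and nowhere vanishing on $Q$. Thus $\operatorname{sign}(\phi)$ partitions $Q$ into two disjoint open sets $Q^{\pm} := \{ p \in Q \mid \pm \phi(p) > 0 \}$, and any continuous path in $Q$ must lie entirely in one of them. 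It remains to show that both $Q^+$ and $Q^-$ are nonempty. Given any $p \in Q$, fix an orthogonal matrix $R$ with $\det R = -1$ (for instance, a single coordinate reflection), and let $p' \in P$ be obtained from $p$ by replacing each agent $x_k$ of $G_i$ by $R x_k$ while leaving the remaining agents unchanged. Then $p'_j = p_j$ for every $w_j \in W_+ \setminus \{w_i\}$, and $p'_i$ has the same rank as $p_i$, so $p' \in Q$. A direct calculation gives $\det A_{p'_i} = (\det R)\det A_{p_i} = -\phi(p)$, so $\phi(p)$ and $\phi(p')$ carry opposite signs. Hence both $Q^+$ and $Q^-$ are nonempty, which shows that $Q$ is not path-connected.

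There is no substantive obstacle in this plan; the only point meriting attention is the verification that the reflected configuration $p'$ still lies in $Q$, and this is immediate from the pairwise disjointness of the strong components in the SCD: applying $R$ only to the agents of $G_i$ leaves each other sub-configuration $p_j$, for $w_j \in W_+ \setminus \{w_i\}$, unchanged, while preserving the rank of $p_i$ itself.
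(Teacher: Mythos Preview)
Your argument is correct and rests on the same invariant the paper uses: the sign of $\det A_{p_i}$. The paper's proof simply cites Proposition~\ref{pro:disconnect} to say $P^n_i\times P_{-i}$ has two connected components, rewrites $Q=(P^n_i\times P_{-i})-\bigcup_{j\in W_+-\{i\}}(\widetilde P_j\times P_{-j})$, and asserts that $Q$ is therefore disconnected; your reflection construction is in fact more careful, because it explicitly produces points of both signs \emph{inside $Q$}, a step the paper leaves to the reader.

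One small gap in your case split: in the ``main case'' $N_i=n+1$ you begin with ``given any $p\in Q$,'' but $Q$ can still be empty there if some \emph{other} $w_j\in W_+$ has $N_j\le n$. The clean fix is to split on whether any $w_j\in W_+$ has $N_j\le n$ (then $Q=\emptyset$ by Proposition~\ref{pro:disconnect}(1) applied to that $w_j$), and otherwise all $N_j\ge n+1$, so $Q$ is nonempty and your sign argument applies with the given $w_i$. Also, under the paper's own definition (``disconnected'' means not path-connected), the empty set is vacuously path-connected, so your appeal to a convention that $\emptyset$ is disconnected does not match the paper's usage; the paper records $Q=\emptyset$ as a separate conclusion in Proposition~\ref{pro:disconnect} rather than folding it into ``disconnected.''
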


\begin{proof}
First, from Proposition~\ref{pro:disconnect}, we know that $P^n_i$ has two connected components, and hence, so does $P^n_i\times P_{-i}$. On the other hand, from~\eqref{exprQ}, we have that
$$
Q = (P^n_i \times P_{-i}) - \bigcup_{j\in W_+ - \{i\}} \left(\widetilde P_j\times P_{-j} \right)
$$
from which it follows that the set $Q$ is disconnected. 
\end{proof}

We conclude this section by applying Propositions~\ref{LPCOD} and~\ref{pro:disconnect} to a special case where the digraph $G$ is strongly connected. In this case, the skeleton digraph $H$ of $G$ is comprised of only one vertex, and hence, $Q$ is the set of non-degenerate configurations in $\R^n$. We have the following fact:

\begin{cor}\label{cor:threecases}
Let $G$ be a strongly connected graph with $N$ vertices. Then, the following three properties hold:
\begin{itemize}
\item[1.] If $N< n + 1$, then  $Q$ is empty.
\item[2.] If $N = n + 1$, then $Q$ is open dense in $P$, and it has two connected components. 
\item[3.] If $N > n + 1$, then $Q$ is an open dense, path-connected subset of $P$. 
\end{itemize}
\end{cor}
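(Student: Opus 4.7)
The plan is to observe that when $G$ is strongly connected, the coarse strong component decomposition reduces to the trivial one: it contains a single subgraph $G_1 = G$ with $N_1 = N$. The skeleton digraph $H$ then has a single vertex $w_1$, which is vacuously maximal, so $W_+ = \{w_1\}$. Under this identification, the complementary vertex set $V_{-1}$ is empty, so $P_{-1}$ is a point, and the defining relation \eqref{PCOD} collapses to
\begin{equation*}
Q = \{ p \in P \mid r_p = n \} = P^n,
\end{equation*}
the set of non-degenerate configurations in $\R^n$. Once this identification is made, each of the three cases reduces to a direct application of a result already established in the excerpt.

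For case~1, I would note that $N < n+1$ is exactly the condition $N_1 - n \le 0$, so the first conclusion of Proposition~\ref{pro:disconnect} immediately gives $Q = \emptyset$. For case~2, the condition $N = n+1$ is the case $N_1 - n = 1$ of Proposition~\ref{pro:disconnect}, which yields that $P^n_1 = P^n = Q$ has exactly two connected components. Openness and density of $Q$ in $P$ in this case follow from~\eqref{eq:clospn}, which states $\overline{P^n} = P$, combined with the fact that $P^n$ is open.

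For case~3, the hypothesis $N > n+1$ is precisely $|G_1| > n+1$, which is the hypothesis of Theorem~\ref{MTHM} for the (single) maximal component. Proposition~\ref{LPCOD} then applies verbatim, yielding that $Q$ is an open dense, path-connected subset of $P$.

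There is no real obstacle here: the corollary is a bookkeeping statement that instantiates the earlier propositions in the strongly connected case. The only thing to be careful about is the degenerate identification of the decomposition $(W_+, V_{-i}, P_{-i})$ when $q = 1$, and checking that the three conditions on $N$ match exactly the three exhaustive regimes $N_1 - n \le 0$, $N_1 - n = 1$, $N_1 - n > 1$ covered by the prior results.
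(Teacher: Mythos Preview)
Your proposal is correct and follows exactly the paper's own approach: the paper's proof is simply the remark that the first two properties follow from Proposition~\ref{pro:disconnect} and the third from Proposition~\ref{LPCOD}. Your explicit identification of $Q$ with $P^n$ in the strongly connected case and the derivation of openness and density in case~2 from~\eqref{eq:clospn} are minor elaborations, but the argument is the same.
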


The first two properties in the corollary follow from Proposition~\ref{pro:disconnect}, and the last one follows from Proposition~\ref{LPCOD}.

\subsection{Non-degenerate sub-configurations}

If $p$ is a non-degenerate configuration in $\mathbb{R}^n$, then there exists at least {\it one} set of $(n+1)$ agents such that the sub-configuration formed by these $(n+1)$ agents is non-degenerate in $\mathbb{R}^n$. We establish below a tighter lower bound on the number of non-degenerate sub-configurations of $p$. 

To this end, let $p = (x_1,\ldots, x_{n+1})$ be a configuration in $\bb{R}^n$ associated with a digraph $G = (V,E)$ with $(n+1)$ vertices.  For each $i \in V$, denote by $S_i$ the affine subspace of $\R^n$ of lowest dimension that contains the $n$ vectors $x_j$'s, for $j\in V - \{i\}$. Note that if $p$ is non-degenerate, then each $S_i$, for $i \in V$, is a hyperplane in $\R^n$, i.e.,  $\dim S_i = n-1$. Similarly,  for any proper subset $V'\subset V$, we define $S_{V'}$ as the affine subspace of $\R^n$ of lowest dimension that contains vectors $x_i$'s, for all $i\in V - V'$. We now establish the following result which relates $S_{V'}$ to the intersection of the hyperplanes $S_i$'s:


\begin{pro}\label{pro:intersection}
Let $p = (x_1,\ldots, x_{n+1})$ be a non-degenerate configuration in $\R^n$ associated with a digraph $G = (V,E)$ with $(n+1)$ vertices.    Let $V'$ be a proper subset of $V$. Then, 
$$S_{V'} = \cap_{i\in V'} S_i.$$
\end{pro}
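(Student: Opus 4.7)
The plan is to prove both inclusions separately, with the reverse inclusion handled inductively on $k := |V'|$.

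The inclusion $S_{V'} \subseteq \bigcap_{i\in V'} S_i$ is immediate from the definitions: for every $j\in V-V'$ and every $i\in V'$ one has $j\neq i$, so $x_j \in S_i$. Hence all points defining $S_{V'}$ lie in each hyperplane $S_i$, so their affine hull does as well. The rest of the proof is the reverse inclusion.

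For the reverse inclusion, I plan to argue by induction on $k = |V'|$. Since $p$ is non-degenerate in $\R^n$, the $n+1$ points $x_1,\ldots,x_{n+1}$ are affinely independent, and therefore any subset of $m$ of them spans an affine subspace of dimension $m-1$. In particular $\dim S_{V'} = n-k$ and each $S_i$ is a hyperplane. The base case $k=1$ is the definition $S_{\{i\}}=S_i$. For the inductive step, split off one element $i_k\in V'$, set $V'' := V'\setminus\{i_k\}$, and apply the inductive hypothesis to obtain $\bigcap_{i\in V''} S_i = S_{V''}$ with $\dim S_{V''} = n-k+1$. Thus
\[
\bigcap_{i\in V'} S_i \;=\; S_{V''}\cap S_{i_k},
\]
which by the first inclusion contains $S_{V'}$.

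To finish, I would show $\dim(S_{V''}\cap S_{i_k}) \le n-k$, which forces this intersection to coincide with the already-contained $(n-k)$-dimensional affine subspace $S_{V'}$. For this it suffices to show $S_{V''}\not\subseteq S_{i_k}$. This is exactly where non-degeneracy enters: since $i_k\notin V''$, the point $x_{i_k}$ lies in $S_{V''}$; on the other hand, $x_{i_k}\notin S_{i_k}$, for otherwise all $n+1$ points would lie in the single hyperplane $S_{i_k}$, contradicting affine independence. Given that $S_{V''}\cap S_{i_k}$ is nonempty (it contains $S_{V'}$) and $S_{V''}\not\subseteq S_{i_k}$, the standard fact that intersecting an affine subspace with a hyperplane it does not lie in drops the dimension by exactly one yields $\dim(S_{V''}\cap S_{i_k}) = n-k$, completing the induction.

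The main obstacle, modest though it is, is the bookkeeping of affine dimensions and ensuring the ``intersecting with a transverse hyperplane drops dimension by one'' step is invoked only when the intersection is nonempty; the geometric content — namely that each hyperplane $S_i$ in the intersection contributes exactly one codimension, guaranteed by affine independence of the $x_j$'s — is the only nontrivial ingredient.
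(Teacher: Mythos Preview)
Your proof is correct, but it proceeds by a genuinely different route than the paper's. The paper argues directly: after relabeling so that $V'=\{1,\ldots,k\}$, it uses the point $x_{n+1}$ as a common origin for all the $S_i$ with $i\in V'$, writes each $S_i$ as $x_{n+1}+\span\{x_j-x_{n+1}\mid j\neq i\}$, and then observes that intersecting these linear spans amounts to intersecting $k$ ``coordinate hyperplanes'' relative to the basis $\{x_1-x_{n+1},\ldots,x_n-x_{n+1}\}$, which immediately gives $\span\{x_j-x_{n+1}\mid j\notin V'\}$. Your argument instead peels off one hyperplane at a time by induction on $|V'|$ and finishes with a dimension count, using that $x_{i_k}\in S_{V''}\setminus S_{i_k}$ to certify the drop in dimension at each step. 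The paper's approach buys brevity by reducing everything in one stroke to a trivial basis computation; your approach avoids choosing a privileged base point and makes the geometric mechanism---each hyperplane contributing exactly one codimension---more explicit. Either is perfectly adequate here.
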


\begin{proof}
Without loss of generality, we assume that $V' = \{1,\ldots, k\}$ with $k\le n$.   For each $i \in V'$, denote by  $\span \{x_j - x_{n+1} \mid  j\neq i \}$ the linear subspace of $\R^n$ spanned by the $(n-1)$ vectors $\{x_j - x_{n+1} \mid  j\neq i \}$. So then, 
$$
S_i = x_{n+1} + \span \{x_j - x_{n+1} \mid  j\neq i \}.
$$
Since $p$ is non-degenerate, the $n$ vectors $\{x_i - x_{n+1}\mid 1\le i\le n\}$ are linearly independent. Thus, 
$$
\bigcap_{i \in V'} \span  \{x_j - x_{n+1} \mid  j\neq i \} = \span  \{x_j - x_{n+1} \mid  j \notin V' \}, 
$$ 
and hence, it follows that 
$$
\bigcap_{i\in V'} S_i=  x_{n + 1} +  \span  \{x_j - x_{n+1} \mid  j \notin V' \} = S_{V'}.
$$
\end{proof}
 
We obtain a corollary of Proposition~\ref{pro:intersection} as follows:

\begin{cor}\label{cor:intersection}
Let $p = (x_1,\ldots, x_{n+1})$ be a non-degenerate configuration in $\R^n$ associated with a digraph $G = (V,E)$ with $(n+1)$ vertices. Then, for each $i\in V$, we have $$\cap_{j\neq i} S_i = \{x_i\}.$$
\end{cor}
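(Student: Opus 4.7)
The plan is to deduce this corollary as an immediate specialization of Proposition~\ref{pro:intersection}. The idea is to take the proper subset $V' := V - \{i\}$, which has cardinality $n$ and is indeed a proper subset of $V$ (since $|V| = n+1$), and then simply unwind the definitions on both sides of the identity $S_{V'} = \cap_{j \in V'} S_j$.

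First I would observe that with this choice of $V'$, the left-hand side $S_{V'}$ is by definition the affine subspace of $\R^n$ of lowest dimension containing the vectors $x_k$ for all $k \in V - V'$. But $V - V' = \{i\}$, so $S_{V'}$ is the smallest affine subspace containing the single point $x_i$, which is $\{x_i\}$ itself.

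On the other hand, the right-hand side $\cap_{j \in V'} S_j$ is exactly $\cap_{j \neq i} S_j$, so Proposition~\ref{pro:intersection} directly yields $\cap_{j \neq i} S_j = \{x_i\}$, as claimed. (I would note in passing that the statement of the corollary as printed contains the minor typo $S_i$ instead of $S_j$ inside the intersection; the meaningful statement, and the one needed elsewhere, is the one I just proved.)

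I do not anticipate any obstacle: the corollary is a one-line consequence of the proposition, and the only thing worth flagging is that $V'$ has to be a proper subset, which it is precisely because $p$ has $n+1$ points. The non-degeneracy hypothesis has already been used to establish Proposition~\ref{pro:intersection} itself, so no additional assumption enters here.
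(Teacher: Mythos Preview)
Your proposal is correct and matches the paper's own proof essentially line for line: both set $V' := V - \{i\}$, observe that $S_{V'} = \{x_i\}$, and invoke Proposition~\ref{pro:intersection} to conclude. Your remark about the typo $S_i$ versus $S_j$ is also accurate.
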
  

\begin{proof}
Let $V': = V - \{i\}$, then $S_{V'} = \{x_i\}$; indeed, the affine subspace in $\R^n$ of lowest dimension that contains $x_i$ is the singleton $\{x_i\}$. Then, following Proposition~\ref{pro:intersection},  we have $\cap_{i\in V'} S_i  = S_{V'} = \{x_i\}$.
\end{proof}

We now establish the following fact. 
 
\begin{pro}\label{NUMN}
Let $p=(x_1,\ldots,x_{n+1})$ be a non-degenerate configuration in $\mathbb{R}^n$. Then, for any vector $x \in \mathbb{R}^n$, there exist $n$ vectors $\{x_{i_1},\ldots,x_{i_n}\}\subset \{x_1,\ldots,x_{n+1}\}$ such that these $n$ vectors together with $x$ form a non-degenerate configuration in $\mathbb{R}^n$.  
\end{pro}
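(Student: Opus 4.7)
The plan is to reformulate the conclusion in terms of the hyperplanes $S_i$ introduced above and then use Corollary~\ref{cor:intersection} to show that these hyperplanes cannot have a common intersection point. The translation is immediate: for each $i \in \{1,\ldots,n+1\}$, the subconfiguration $\{x_j \mid j \neq i\}$ consists of exactly $n$ points whose affine hull is $S_i$, which has dimension $n-1$ because $p$ is non-degenerate. Consequently, $\{x_j \mid j \neq i\} \cup \{x\}$ is non-degenerate in $\R^n$ if and only if $x \notin S_i$. So to prove the proposition it suffices to show that
\[
\bigcap_{i=1}^{n+1} S_i = \emptyset,
\]
since then for any $x \in \R^n$ there is at least one $i$ with $x \notin S_i$, and taking $\{x_{i_1},\ldots,x_{i_n}\} := \{x_j \mid j\neq i\}$ concludes the argument.

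To establish the emptiness of $\bigcap_{i=1}^{n+1} S_i$, I would apply Corollary~\ref{cor:intersection} twice. That corollary yields, for each fixed $k$, the identity $\bigcap_{j\neq k} S_j = \{x_k\}$. In particular,
\[
\bigcap_{i=1}^{n+1} S_i \subseteq \bigcap_{j\neq 1} S_j = \{x_1\},
\qquad
\bigcap_{i=1}^{n+1} S_i \subseteq \bigcap_{j\neq 2} S_j = \{x_2\}.
\]
Since $p$ is non-degenerate we have $x_1 \neq x_2$, so the two singletons are disjoint, forcing the full intersection to be empty.

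There is essentially no obstacle: the work has already been done in Proposition~\ref{pro:intersection} and Corollary~\ref{cor:intersection}, which provide the precise description of the intersections of the hyperplanes $S_i$. The only conceptual step is the reformulation of non-degeneracy of $\{x_{i_1},\ldots,x_{i_n},x\}$ as the condition $x \notin S_i$ for the omitted index, after which the proof is just a two-line set-theoretic computation.
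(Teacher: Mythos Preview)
Your argument is correct and follows essentially the same route as the paper's proof: both reformulate the question in terms of the hyperplanes $S_i$ and then use Corollary~\ref{cor:intersection} to conclude that $\bigcap_{i=1}^{n+1} S_i = \emptyset$. The only cosmetic differences are that the paper phrases this as a proof by contradiction and intersects all $n+1$ singletons $\{x_i\}$ rather than just two, but the content is the same.
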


\begin{proof}
We prove the result by contradiction. Assume that there is a vector $x$ in $\mathbb{R}^n$ such that there does not exist a set of $n$ vectors  $\{x_{i_1},\ldots,x_{i_n}\}$ out of $\{ x_1,\ldots,x_{n+1}\}$ such that $x,x_{i_1},\ldots,x_{i_n}$ form a non-degenerate configuration in $\mathbb{R}^n$.

Recall that $S_i$ is the affine subspace  of $\R^n$ of lowest dimension that contains the $n$ vectors $x_j$'s, for $j\neq i$. Since $p$ is a non-degenerate configuration, by Corollary~\ref{cor:intersection}, we have
$$
\bigcap_{j \neq i} S_j = \{x_i\}.
$$
So then, we have 
\begin{equation*}
\bigcap^{n+1}_{i = 1} S_i  = \bigcap^{n+1}_{i = 1}\bigcap_{j \neq i} S_j = \bigcap^{n+1}_{i = 1} \{x_i\} =  \emptyset.
\end{equation*}
On the other hand, each $S_i$ has to contain the vector $x$ because otherwise the $(n+1)$ vectors $x_1,\ldots,x_{i-1},$  $x_{i+1},\ldots,x_{n+1},$ $x$  form a non-degenerate configuration in $\mathbb{R}^n$. Thus,  
\begin{equation*}
x\in \bigcap^{n+1}_{i = 1} S_i = \emptyset
\end{equation*}  
which is a contradiction. This completes the proof.
\end{proof}

We  obtain as a corollary a lower bound on the number of non-degenerate sub-configurations of $p$:

\begin{cor}
Let $p \in \R^{nN}$ be a non-degenerate configuration with $N > n$. Then, there are at least $(N-n)$ sub-configurations of $(n+1)$ agents that are non-degenerate in $\mathbb{R}^n$. 
\end{cor}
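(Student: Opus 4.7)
The plan is to leverage Proposition~\ref{NUMN} together with the observation (noted at the start of this subsection) that every non-degenerate configuration contains at least one non-degenerate sub-configuration of $(n+1)$ agents. One such sub-configuration serves as a seed, and Proposition~\ref{NUMN} will be applied once per remaining agent to manufacture one additional non-degenerate sub-configuration, producing $N-n$ in total.

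Concretely, I would first fix an index set $I = \{i_1, \ldots, i_{n+1}\} \subseteq \{1, \ldots, N\}$ such that the sub-configuration $p_I := (x_{i_1}, \ldots, x_{i_{n+1}})$ is non-degenerate in $\R^n$; this exists because $p$ is non-degenerate. Let $K := \{1, \ldots, N\} \setminus I$, which has cardinality $N-n-1$. For each $k \in K$, apply Proposition~\ref{NUMN} to the base configuration $p_I$ with the extra vector $x := x_k$. The proposition supplies an $n$-element subset $T_k \subseteq I$ such that the sub-configuration indexed by $J_k := \{k\} \cup T_k$ is non-degenerate in $\R^n$.

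The final step is to verify that the $N-n$ index sets $I$ and $\{J_k\}_{k \in K}$ are pairwise distinct, so that they correspond to distinct sub-configurations of $(n+1)$ agents. This is immediate by tracking membership in $K$: the set $I$ is disjoint from $K$, while $J_k$ meets $K$ precisely in the singleton $\{k\}$ since $T_k \subseteq I$. Hence $I$ differs from each $J_k$, and distinct choices of $k \in K$ yield distinct $J_k$'s. This produces $N - n$ distinct non-degenerate sub-configurations of $(n+1)$ agents, as claimed.

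I do not foresee a serious obstacle: the geometric core is already packaged in Proposition~\ref{NUMN}, and the remaining work is just careful bookkeeping at the level of indices (rather than points), which is what ensures the sub-configurations are genuinely distinct.
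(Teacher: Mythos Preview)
Your proposal is correct and matches the argument the paper implicitly has in mind: the corollary is stated without proof, precisely because it follows from Proposition~\ref{NUMN} by the bookkeeping you describe---fix one non-degenerate $(n+1)$-agent sub-configuration and then, for each of the remaining $N-n-1$ agents, invoke Proposition~\ref{NUMN} to produce one more. Your index-level distinctness check is the right way to ensure the resulting sub-configurations are genuinely different.
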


\subsection{Lie algebra of zero row-sum matrices}\label{sec:LieAlg1}
\begin{Definition}[Zero row-sum matrices]\label{def:zerorowsummatr}
Denote by $\mathbf{1}$ the vector of $\R^N$ with all entries one. We say that a matrix $A \in \R^{N\times N}$ is a {\bf zero row-sum} matrix if $A\mathbf{1}=0$. We denote by $\bb{A}$ the vector space of such matrices. \end{Definition}
It is easy to verify that the commutator or Lie bracket of two zero row-sum matrices is also a zero row-sum matrix. Hence, the vector space $\bb{A}$ is  a Lie algebra. We derive here some properties of the Lie algebra of zero row-sum  matrices that are needed in the proof of the main Theorem. 

Let $e_1,\ldots,e_N$ be the canonical basis of $\mathbb{R}^N$. Let $A_{ij} \in \bb{A}$ be defined as follows:
\begin{equation*}
A_{ij}:= -e_ie^{\top}_i + e_ie^{\top}_j.
\end{equation*} 
Note that the matrix $A_{ij}$ is the negative of the Laplacian matrix of a digraph with $N$ vertices and  only one edge, namely $v_i\to v_j$. For a digraph $G = (V,E)$, define a set of zero-row sum matrices as follows:
\begin{equation*}
A_{G}:= \left\{A_{ij} \in \R^{N \times N} \mid v_i\to v_j \in E \right\}
\end{equation*}
It is easy to see that  matrices in $A_{G}$ are linearly independent. We denote by  $\bb{A}_{G}$ the vector space spanned by elements in $A_{G}$. Further, we introduce the following definitions:
\begin{Definition}
[Lie algebra closure of a vector space of matrices] Given a vector space $\bb{A} \subset \R^{N\times N}$ of matrices, we denote by $\overline{\bb{A}}$ the  {\bf Lie algebra closure} of ${\bb{A}}$, defined  as the vector space of least dimension in $\R^{N\times N}$ which contains $\bb{A}$ and is closed under the matrix Lie bracket. 
\end{Definition}
 
\begin{Definition}[Transitive closure of a digraph]\label{Def: Tranclos}
Given a digraph $G$, we denote by $\overline{G}$  the {\bf transitive closure} of $G$: $\overline G$ has the same vertex set as $G$ and there is  an  edge $v_i\to v_j$  in $\overline{G}$ if and only if there is a path from $v_i$ to $v_j$ in $G$.

\end{Definition}
  For illustration of the transitive closure of a digraph, we refer to the example  depicted in Figure~\ref{TranC}.

\begin{figure}[h]
\begin{center}
\includegraphics[scale=.42]{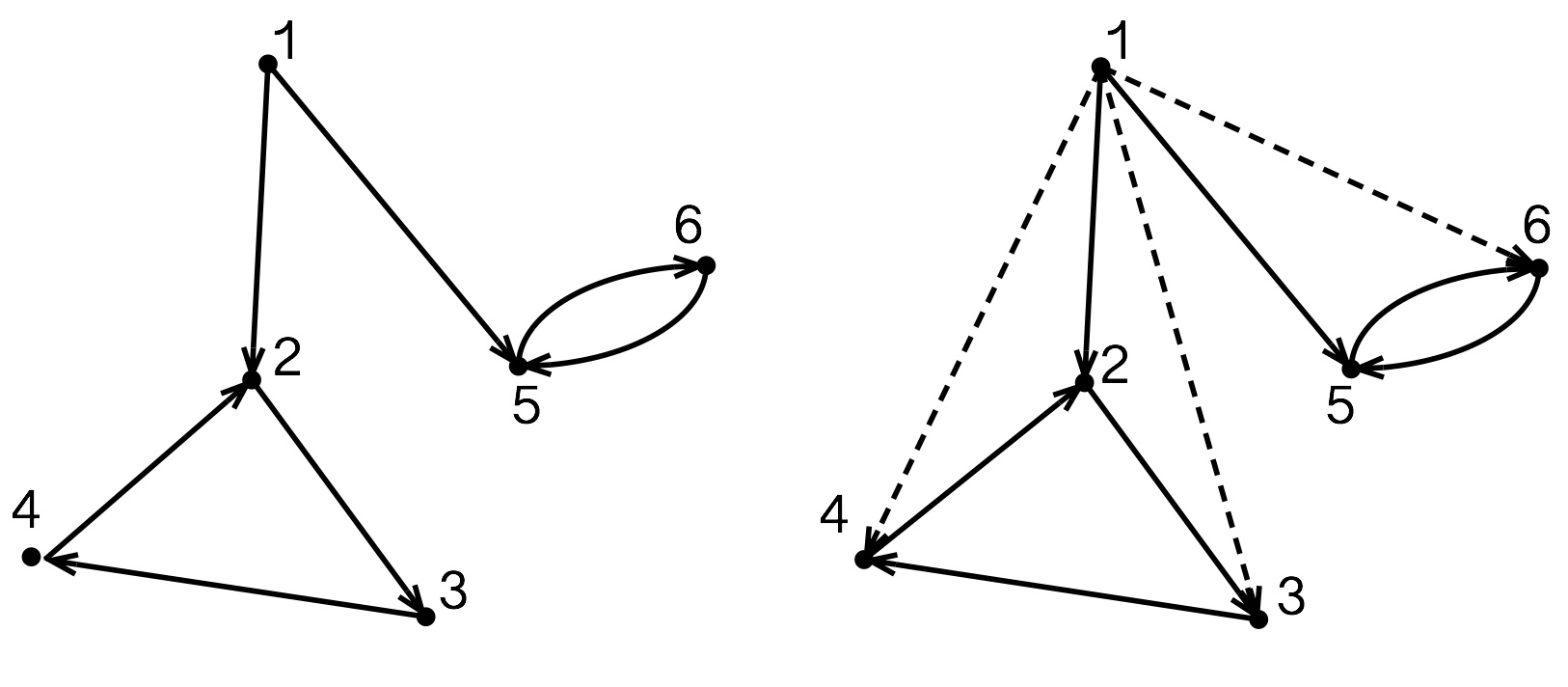}
\caption{The digraph on the right hand side of this figure is the transitive closure of the digraph on the left hand side.}
\label{TranC}
\end{center}
\end{figure}

Our goal in this section is to evaluate $\ol{\bb{A}}_G$ for $G$, a weakly connected graph. In particular, we  establish the following result.

\begin{pro}\label{LIEAL}
Let $G $ be a weakly connected digraph, and  $\overline{G} $ be its transitive closure. Let $\ol{\bb{A}}_G$ be the  Lie algebra closure of $\bb{A}_{G}$.  Then, $$\ol{\bb{A}}_G = \mathbb{A}_{\overline{G}}.$$
\end{pro}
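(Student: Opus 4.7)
The plan is to prove the two inclusions $\mathbb{A}_{\overline{G}} \subseteq \overline{\mathbb{A}}_G$ and $\overline{\mathbb{A}}_G \subseteq \mathbb{A}_{\overline{G}}$ separately. The workhorse will be a single direct computation of the matrix products $A_{ij} A_{kl} = (-e_i e_i^\top + e_i e_j^\top)(-e_k e_k^\top + e_k e_l^\top)$, which, after using $e_a^\top e_b = \delta_{ab}$, is nonzero only when $i=k$ or $j=k$. Recording the two resulting identities
\[
A_{ij}A_{il} \;=\; -A_{il}, \qquad A_{ij}A_{jl} \;=\; A_{il} - A_{ij} \quad (i\neq j),
\]
will supply every bracket formula I need.

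For the inclusion $\mathbb{A}_{\overline{G}}\subseteq \overline{\mathbb{A}}_G$, I would argue by induction on path length in $G$. An edge $v_i\to v_j$ of $\overline{G}$ corresponds to some directed path $v_{i_0}\to v_{i_1}\to\cdots\to v_{i_m}$ in $G$, which I may take simple. The base case $m=1$ is immediate since $A_{i_0 i_1}\in \mathbb{A}_G\subseteq \overline{\mathbb{A}}_G$. For the step, assume $A_{i_0,i_{m-1}}\in\overline{\mathbb{A}}_G$; then the identity above yields
\[
[A_{i_0,i_{m-1}},A_{i_{m-1},i_m}] \;=\; A_{i_0,i_m} - A_{i_0,i_{m-1}},
\]
so $A_{i_0,i_m}\in\overline{\mathbb{A}}_G$ as well. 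Hence every generator of $\mathbb{A}_{\overline{G}}$ lies in $\overline{\mathbb{A}}_G$.

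For the reverse inclusion, my strategy is to show that $\mathbb{A}_{\overline{G}}$ is itself a Lie subalgebra of $\mathbb{A}$ that contains $\mathbb{A}_G$; minimality of the Lie algebra closure then forces $\overline{\mathbb{A}}_G \subseteq \mathbb{A}_{\overline{G}}$. The containment $\mathbb{A}_G \subseteq \mathbb{A}_{\overline{G}}$ is trivial because every edge of $G$ is an edge of $\overline{G}$. To verify closure under brackets it suffices, by bilinearity, to check $[A_{ij},A_{kl}]$ for edges $v_i\to v_j$ and $v_k\to v_l$ of $\overline{G}$. The cases to treat are: $\{i,j\}\cap\{k,l\}=\emptyset$ (bracket is zero); $i=k$ (bracket equals $A_{ij}-A_{il}$, already in $\mathbb{A}_{\overline{G}}$); $j=k$ (bracket equals $A_{il}-A_{ij}$); and the symmetric case $l=i$ (bracket equals $A_{ki}-A_{kj}$). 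The crucial observation is that $\overline{G}$ is transitively closed — if $v_a\to v_b$ and $v_b\to v_c$ are edges in $\overline{G}$, then concatenating the underlying paths in $G$ gives an edge $v_a\to v_c$ in $\overline{G}$ — so in the last two cases the new matrix $A_{il}$ or $A_{kj}$ that appears is in fact associated with an edge of $\overline{G}$, hence lies in $\mathbb{A}_{\overline{G}}$.

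The only nontrivial step is simply the complete enumeration of index-coincidence cases in the bracket computation, which is mechanical once the two product identities above are in hand; no deeper obstacle is anticipated. Putting the two inclusions together yields $\overline{\mathbb{A}}_G = \mathbb{A}_{\overline{G}}$, as required.
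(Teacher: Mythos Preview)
Your proposal is correct and follows essentially the same approach as the paper: both prove the two inclusions separately, use induction on path length with the bracket identity $[A_{ij},A_{jl}]=A_{il}-A_{ij}$ to get $\mathbb{A}_{\overline{G}}\subseteq\overline{\mathbb{A}}_G$, and verify that $\mathbb{A}_{\overline{G}}$ is a Lie subalgebra (via the same case analysis on index coincidences) to get the reverse inclusion. The only minor omission is that your case list skips $j=l$ with all other indices distinct, but there the bracket is zero, so the enumeration remains routine as you say.
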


Proposition~\ref{LIEAL} relates the Lie algebra closure  of a set of Laplacian matrices to the transitive closure of a digraph $G$. To prove Proposition~\ref{LIEAL}, 
we  first show that strong component decompositions and transitive closures commute:

\begin{lem}\label{lem:transclos} 
Let $G $ be a weakly connected digraph, and $H $ be the associated skeleton digraph. Then, the following holds:
\begin{itemize}
\item[1.] If $G_1,\ldots, G_q$ form the coarse strong component decomposition of $G$, then $\ol{G}_1,\ldots,\ol{G}_q$ form the coarse strong component decomposition of $\ol{G}$. In particular, each $\ol{G}_i$, for $i=1,\ldots, q$, is a complete graph.
\item[2.] Let $\overline{H}$ be the transitive closure of $H$. Then, $\ol{H}$ is the skeleton digraph of $\overline{G}$. 
\end{itemize}
\end{lem}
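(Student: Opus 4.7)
My plan is to handle the two parts sequentially, with the first part doing essentially all the work and the second part following almost directly from the characterization of edges in the skeleton.

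For part~1, I would first observe that because each $G_i$ is strongly connected, every pair of vertices in $V_i$ is joined by a path in $G_i$, and hence by an edge in $\ol{G}_i$. So $\ol{G}_i$ is the complete digraph on $V_i$. I then need to verify the two clauses of Definition~\ref{def:strongcompdecomp} for the family $\{\ol{G}_1,\dots,\ol{G}_q\}$ in $\ol{G}$. The vertex sets are unchanged, so $\sqcup_i V_i = V$ still holds. The nontrivial point is that $\ol{G}_i$ must be the subgraph of $\ol{G}$ \emph{induced} by $V_i$: suppose $v_k\to v_l\in E(\ol{G})$ with $v_k,v_l\in V_i$; since $\ol{G}_i$ is already complete on $V_i$, this edge is in $\ol{G}_i$, and conversely edges of $\ol{G}_i$ are edges of $\ol{G}$ by definition of transitive closure applied inside $G$. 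Hence $\ol{G}_i$ coincides with the $V_i$-induced subgraph of $\ol{G}$.

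Next, I need to show that this strong component decomposition of $\ol{G}$ is coarse, i.e., that no two of the $\ol{G}_i$'s can be merged into a single strongly connected induced subgraph. Assume for contradiction that there is a strongly connected induced subgraph of $\ol{G}$ containing $V_i\cup V_j$ for some $i\neq j$. Then $\ol{G}$ contains a cycle passing through some $v\in V_i$ and some $w\in V_j$. Each edge of that cycle corresponds, by definition of transitive closure, to a path in $G$; concatenating these paths produces a closed walk in $G$ through $v$ and $w$. This gives a path from $v$ to $w$ and a path from $w$ back to $v$ in $G$, which together with the strong connectivity of $G_i$ and $G_j$ imply that $G$ itself contains a strongly connected subgraph whose vertex set properly contains $V_i$. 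Applying the merging argument of Lemma~\ref{lem:DefSCD} then produces a strong component decomposition of $G$ of cardinality strictly less than $q$, contradicting the coarseness of $\{G_1,\dots,G_q\}$.

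For part~2, using part~1, both $H$ and the skeleton of $\ol{G}$ have the same vertex set (one vertex per $V_i$). It remains to identify their edges. By definition, $w_i\to w_j$ is an edge of the skeleton of $\ol{G}$ iff there exist $v\in V_i$ and $w\in V_j$ with $v\to w\in E(\ol{G})$, iff there is a path in $G$ from some vertex of $V_i$ to some vertex of $V_j$. By definition of $H$, such a $G$-path exists iff there is a path from $w_i$ to $w_j$ in $H$, iff $w_i\to w_j\in E(\ol{H})$. Hence the skeleton of $\ol{G}$ equals $\ol{H}$.

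The only slightly delicate step is the induced-subgraph verification and the lifting of cycles in $\ol{G}$ to closed walks in $G$ used to contradict coarseness; the rest is essentially unwinding definitions.
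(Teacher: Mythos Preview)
Your proof is correct and follows essentially the same route as the paper's. Both arguments show that each $\ol{G}_i$ is complete, then establish coarseness by lifting strong connectivity in $\ol{G}$ back to mutual reachability in $G$ and deriving a contradiction; and both identify the edges of the skeleton of $\ol{G}$ with edges of $\ol{H}$ via the equivalence ``path in $G$ between components $\Leftrightarrow$ path in $H$.'' The only stylistic differences are that the paper phrases the coarseness contradiction through the partial order $\succ$ on $H$ (obtaining $w_{i}\succ w_{j}$ and $w_{j}\succ w_{i}$, impossible since $H$ is acyclic) rather than by invoking the merging step of Lemma~\ref{lem:DefSCD}, and the paper separately verifies that $\ol{H}$ is acyclic, which in your argument is implicit once the edge sets are shown to coincide.
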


We prove the Lemma in the Appendix.

We now evaluate the Lie brackets of matrices in $A_G$.

\begin{lem}\label{COMP}
Let $(i,j)$ and $(i',j')$ be two pairs of positive integers with $1 \leq i\neq j \leq n$ and $1 \leq i'\neq j'\leq n$. Then,  the following three properties hold:
\begin{itemize}
\item[1.] If $i \neq i'$ and $j \neq i'$, then 
$$[A_{ij},A_{i'j'}] =0.$$
\item[2.] If $i=i'$, then
\begin{equation*} 
[A_{ij}, A_{ij'}] = A_{ij} - A_{ij'}.
\end{equation*}
\item[3.] If $j = i'$, then  
\begin{equation*} \label{GENE}
[A_{ij}, A_{jj'}] = A_{ij'} - A_{ij}.
\end{equation*}  
\end{itemize}
\end{lem}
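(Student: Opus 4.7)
The plan is to prove all three bracket identities by direct matrix computation after recognizing the rank-one factorization
$$A_{ij} = e_i(e_j - e_i)^\top.$$
With this representation the product of two such matrices collapses to a scalar multiple of a single outer product:
$$A_{ij}A_{i'j'} = \bigl[(e_j - e_i)^\top e_{i'}\bigr]\, e_i(e_{j'} - e_{i'})^\top = (\delta_{j,i'} - \delta_{i,i'})\, e_i(e_{j'} - e_{i'})^\top,$$
where $\delta$ denotes the Kronecker symbol. The Lie bracket $[A_{ij}, A_{i'j'}]$ is then obtained by forming the analogous expression for the reversed product and subtracting; all three claims reduce to inspection of the resulting Kronecker coefficients.

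For property 1, the hypotheses $i\neq i'$ and $j\neq i'$ make the scalar $\delta_{j,i'} - \delta_{i,i'}$ vanish, and the symmetric hypotheses on the reversed product make its scalar vanish as well, so both products are zero and $[A_{ij}, A_{i'j'}] = 0$. For property 2, I set $i = i'$ in the general formula, obtaining $A_{ij}A_{ij'} = (\delta_{j,i} - 1)\,e_i(e_{j'} - e_i)^\top$; since $i\neq j$ (the standing convention, otherwise $A_{ij}=0$), this simplifies to $-e_i(e_{j'} - e_i)^\top = -A_{ij'}$. Swapping $j$ and $j'$ gives $A_{ij'}A_{ij} = -A_{ij}$, and subtracting yields $[A_{ij}, A_{ij'}] = A_{ij} - A_{ij'}$.

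For property 3, I set $j = i'$, so that $A_{ij}A_{jj'} = (1 - \delta_{i,j})\,e_i(e_{j'} - e_j)^\top = e_i(e_{j'} - e_j)^\top$, which re-expressed in the $A_{\cdot\cdot}$ basis (using $e_i e_k^\top = A_{ik} + e_i e_i^\top$ and cancelling the diagonal term) equals $A_{ij'} - A_{ij}$. The scalar prefactor $\delta_{j',i} - \delta_{j,i}$ of the reversed product $A_{jj'}A_{ij}$ vanishes under the relevant index disjointness, so the reversed product contributes nothing and the stated identity follows.

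The computation has no conceptual obstacle; the main task is careful bookkeeping, taking care to invoke only the index inequalities allowed by the hypotheses (together with the implicit $i\neq j$ inside each $A_{ij}$) when collapsing Kronecker deltas, and to rewrite outer products of the form $e_i(e_k - e_\ell)^\top$ back into the $A_{\cdot\cdot}$ basis at the end of each case.
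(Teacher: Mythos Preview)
Your approach is exactly what the paper has in mind: it omits the proof entirely, stating only that the result ``follows directly from computations,'' and your rank-one factorization $A_{ij}=e_i(e_j-e_i)^\top$ together with the product formula $A_{ij}A_{i'j'}=(\delta_{j,i'}-\delta_{i,i'})\,e_i(e_{j'}-e_{i'})^\top$ is the natural way to carry that computation out.

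There is, however, a genuine bookkeeping gap. In Property~1 you assert that the reversed product $A_{i'j'}A_{ij}$ vanishes because of ``symmetric hypotheses,'' but the hypotheses $i\neq i'$ and $j\neq i'$ constrain only $i'$, not $j'$. The scalar in front of the reversed product is $\delta_{j',i}-\delta_{i',i}$; the second Kronecker delta is zero by $i\neq i'$, but nothing in the stated hypotheses forces $\delta_{j',i}=0$. If $j'=i$ (which is not excluded), then $A_{i'j'}A_{ij}=e_{i'}(e_j-e_i)^\top$ and the bracket equals $A_{i'j'}-A_{i'j}\neq 0$. The same unstated assumption $j'\neq i$ reappears in your treatment of Property~3 (``vanishes under the relevant index disjointness''): when $j'=i$ one has $[A_{ij},A_{ji}]=A_{ji}-A_{ij}$, not the $-A_{ij}$ your formula would give. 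This is arguably a defect of the lemma \emph{statement} as much as of your argument---Property~1 as written is simply false without the extra hypothesis $j'\neq i$---but your write-up should flag the missing assumption explicitly rather than absorb it into a phrase like ``symmetric hypotheses.''
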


We omit the proof of the above, as the result follows directly from computations. We mention that similar results have been obtained in~\cite{XC2014CDC} and~\cite{CZ2014CDC}.
As a corollary of Lemma~\ref{COMP}, we have the following fact:

\begin{cor}\label{COR} 
If  $v_i\to v_j$ is an edge of $\overline{G}$, then the matrix $A_{ij}$ is contained in $\ol{\bb{A}}_G$.
\end{cor}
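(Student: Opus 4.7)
The statement follows directly from a path-induction argument based on Lemma~\ref{COMP}(3). The plan is as follows.

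By definition of the transitive closure, if $v_i \to v_j$ is an edge of $\overline{G}$, then there exists a path in $G$ from $v_i$ to $v_j$. Since $G$ is finite, we may take this path to be simple; write it as $v_i = v_{k_0} \to v_{k_1} \to \cdots \to v_{k_m} = v_j$, so that $A_{k_l k_{l+1}} \in A_G \subset \mathbb{A}_G$ for every $l = 0, \ldots, m-1$.

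I would then prove by induction on $l$ that $A_{k_0 k_l} \in \ol{\mathbb{A}}_G$ for every $l = 1, \ldots, m$. The base case $l=1$ is immediate since $A_{k_0 k_1} \in \mathbb{A}_G \subseteq \ol{\mathbb{A}}_G$. For the inductive step, assume $A_{k_0 k_l} \in \ol{\mathbb{A}}_G$. Applying Lemma~\ref{COMP}(3) with the identifications $i \leftrightarrow k_0$, $j \leftrightarrow k_l$, $j' \leftrightarrow k_{l+1}$ (valid because the vertices along a simple path are distinct, so in particular $k_0 \neq k_l$, $k_l \neq k_{l+1}$ and $k_0 \neq k_{l+1}$) yields
\begin{equation*}
[A_{k_0 k_l},\, A_{k_l k_{l+1}}] \;=\; A_{k_0 k_{l+1}} - A_{k_0 k_l}.
\end{equation*}
Since $\ol{\mathbb{A}}_G$ is, by definition, a vector space closed under the matrix Lie bracket that contains $\mathbb{A}_G$, the left-hand side lies in $\ol{\mathbb{A}}_G$. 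Rearranging,
\begin{equation*}
A_{k_0 k_{l+1}} \;=\; [A_{k_0 k_l},\, A_{k_l k_{l+1}}] + A_{k_0 k_l} \;\in\; \ol{\mathbb{A}}_G,
\end{equation*}
completing the induction. Setting $l = m$ gives $A_{ij} = A_{k_0 k_m} \in \ol{\mathbb{A}}_G$, which is the desired conclusion.

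There is no real obstacle here: the only mild care needed is to select a \emph{simple} path so that the hypotheses of Lemma~\ref{COMP}(3) are met at each inductive step (i.e., the intermediate index $k_l$ is genuinely distinct from both endpoints being combined). Everything else is a direct bookkeeping consequence of the commutator identity.
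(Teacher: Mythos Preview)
Your proposal is correct and follows essentially the same approach as the paper: both argue by induction along a path from $v_i$ to $v_j$ in $G$, invoking the commutator identity of Lemma~\ref{COMP}(3) at each step to obtain $A_{k_0 k_{l+1}} = [A_{k_0 k_l}, A_{k_l k_{l+1}}] + A_{k_0 k_l}$. Your version is in fact slightly more careful than the paper's, since you explicitly take the path to be simple to guarantee the distinctness hypotheses needed to apply Lemma~\ref{COMP}(3).
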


\begin{proof}
Let $v_{i}\to v_j$ be an edge of $\ol{G}$; then by the definition of transitive closure, there is a path from $v_i$ to $v_j$ in $G$. Suppose that the path is of length $k$, and we express it as follows:  
$$
v_{i_0}\to \ldots \to v_{i_k}
$$
with $i_0 = i$ and $i_k = j$. We now show that $A_{i_0i_k} \in  \ol{\bb{A}}_G$.  The proof is carried out by induction on the length $k$.
For the base case,  we have $k=2$; then $v_i\to v_j$ is an edge of $G$. Thus, $A_{ij}$ is in $A_{G}$. For the inductive step, suppose that the Lemma holds for $(k-1)$, and we prove for $k$. First, by the induction hypothesis, the matrix $A_{i_0i_{k-1}}$ is in $\ol{\bb{A}}_G$. Then, from  Lemma~\ref{GENE}, we have  
\begin{equation*}
A_{i_0i_{k}} = [A_{i_0i_{k-1}},A_{i_{k-1}i_{k}}] + A_{i_0i_{k-1}} \in \ol{\bb{A}}_G.
\end{equation*}  
This completes the proof. We also illustrate the generating procedure in Figure~\ref{LieA}.
\end{proof}

\begin{figure}[h]
\begin{center}
\includegraphics[scale=.42]{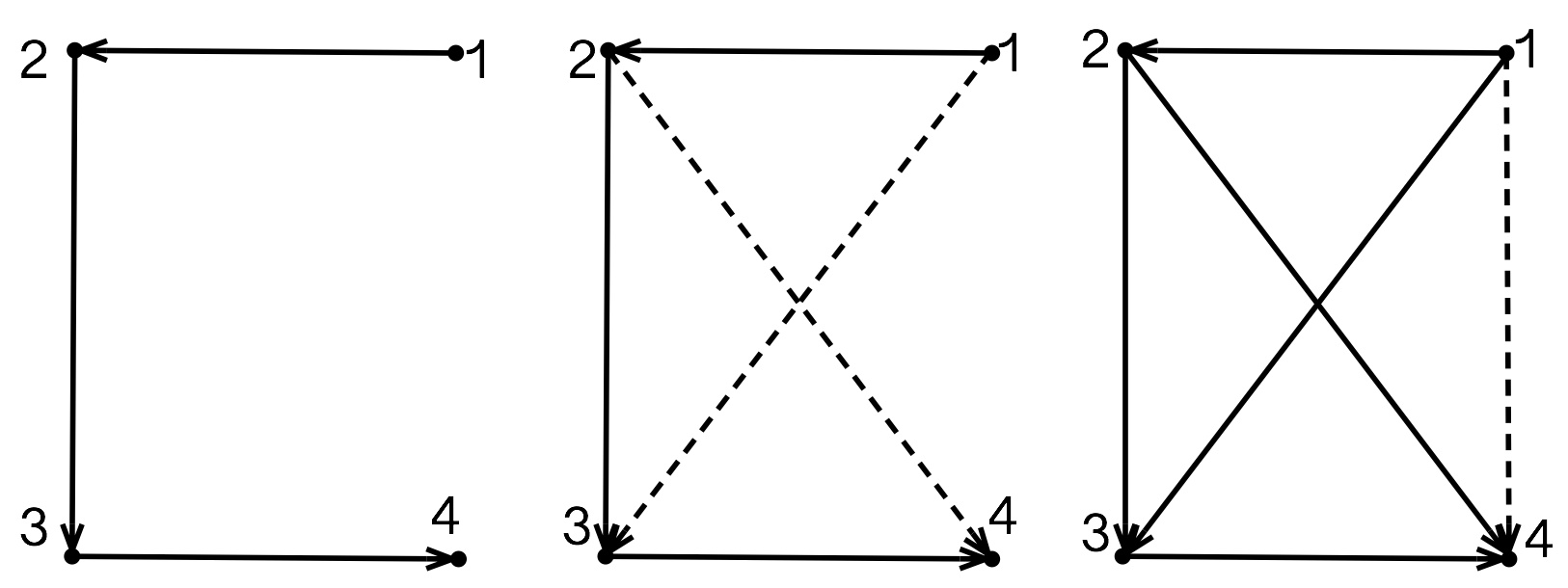}
\caption{We illustrate here the generating process of Lie brackets of $A_{ij}$'s. In this example, we start from $A_{12},A_{23},A_{34}$. Taking Lie brackets, we get $A_{13} = [A_{12},A_{23}] + A_{12}$ and  $A_{24} = [A_{23},A_{34}] + A_{23}$.  Taking the Lie bracket of the original matrix and the newly obtained ones, we get $A_{14} = [A_{13},A_{34}] + A_{13} = [A_{12},A_{24}] + A_{12}$.}
\label{LieA}
\end{center}
\end{figure}

From Corollary~\ref{COR}, it should be clear now that  $\ol{\bb{A}}_G$ contains the vectors space $ \mathbb{A}_{\overline{G}}$. Hence, Proposition~\ref{LIEAL} will be established  after we prove  the following Lemma.

\begin{lem}\label{CLOALG}
The vector space  $ \mathbb{A}_{\overline{G}}$ is closed under the Lie bracket.
\end{lem}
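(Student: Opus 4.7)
The plan is to reduce the problem to checking closure on generators. Since $\mathbb{A}_{\overline{G}}$ is by definition the linear span of the matrices $\{A_{ij} : v_i \to v_j \in \overline{G}\}$, and the Lie bracket is bilinear and antisymmetric, it suffices to verify that $[A_{ij}, A_{i'j'}]$ lies in $\mathbb{A}_{\overline{G}}$ whenever both $v_i \to v_j$ and $v_{i'} \to v_{j'}$ are edges of $\overline{G}$. I would then invoke Lemma~\ref{COMP} and run the case analysis on which indices among $\{i,j,i',j'\}$ coincide.

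In the ``generic'' case where the two edges share no vertex of the relevant kind, Lemma~\ref{COMP} gives $[A_{ij},A_{i'j'}]=0 \in \mathbb{A}_{\overline{G}}$ immediately. When $i=i'$, the bracket is $A_{ij}-A_{ij'}$, which is a difference of two generators of $\mathbb{A}_{\overline{G}}$ and so requires no further argument. The substantive cases are $j=i'$ (Case~3 of Lemma~\ref{COMP}) and its antisymmetric counterpart $i=j'$, where the bracket produces a matrix whose index pair did not come directly from the hypothesis: namely $A_{ij'}$ in the first case and $A_{i'j}$ in the second.

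The key observation is that in precisely these cases the ``new'' index pair corresponds to the concatenation of the two original edges, and here the defining property of the transitive closure does all the work. If $v_i \to v_j$ and $v_j \to v_{j'}$ are edges of $\overline{G}$, then by Definition~\ref{Def: Tranclos} there are paths in $G$ from $v_i$ to $v_j$ and from $v_j$ to $v_{j'}$; their concatenation is a path from $v_i$ to $v_{j'}$ in $G$, so $v_i \to v_{j'}$ is itself an edge of $\overline{G}$, meaning $A_{ij'}$ is a generator of $\mathbb{A}_{\overline{G}}$. The symmetric argument, applied to $v_{i'} \to v_i$ and $v_i \to v_j$, covers the term $A_{i'j}$ in the antisymmetric case.

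I do not anticipate a substantial obstacle; the argument is essentially a bookkeeping exercise combined with one conceptual observation (transitivity of $\overline{G}$). The only delicate point is making the case analysis exhaustive: in particular, one should treat the borderline situation in which $j=i'$ and $i=j'$ hold simultaneously (so the pair is $A_{ij}$ and $A_{ji}$), where the bracket evaluates to a linear combination of the two existing generators and again stays in $\mathbb{A}_{\overline{G}}$. Once all these cases are checked, $\mathbb{A}_{\overline{G}}$ is closed under the Lie bracket, which together with Corollary~\ref{COR} completes the proof of Proposition~\ref{LIEAL}.
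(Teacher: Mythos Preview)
Your proposal is correct and follows essentially the same approach as the paper: reduce to generators, apply the three cases of Lemma~\ref{COMP}, and in the $j=i'$ case use transitivity of $\overline{G}$ to conclude that $A_{ij'}$ is again a generator. Your argument is in fact slightly more careful than the paper's, since you unpack the transitivity step via paths in $G$ and explicitly address the antisymmetric case $i=j'$ and the degenerate case $\{i,j\}=\{i',j'\}$, whereas the paper glosses over these.
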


\begin{proof}
We know that the set $A_{\overline{G}}$ is a basis of the vector space  $\mathbb{A}_{\overline{G}}$. So it suffices to show that for any two matrices $A_{ij}$ and $A_{i'j'}$ in $A_{\overline{G}}$, the Lie bracket $[A_{ij}, A_{i'j'}]$ is a linear combination of matrices in $A_{\overline{G}}$. This is a direct consequence of  Lemma~\ref{COMP}; indeed, in case $i'\neq i$ and $i'\neq j$, we have 
\begin{equation*}
[A_{ij}, A_{i'j'}] = 0,
\end{equation*}
 in case $i' = i$, then 
\begin{equation*} 
[A_{ij}, A_{ij'}] = A_{ij} - A_{ij'},
\end{equation*}
and in case  $i' = j$, then 
\begin{equation*}
[A_{ij}, A_{jj'}] = A_{ij'} - A_{ij}
\end{equation*}
with  $A_{ij'}$ in $A_{\overline{G}}$ because $v_i\to v_j \to v_{j'}$ is a path in $G$ which implies that  $v_i\to v_{j'}$ is an edge of $\overline{G}$.
\end{proof}

Lemma~\ref{CLOALG}  implies that the vector space $\mathbb{A}_{\overline{G}}$ contains the Lie algebra closure  of $\bb{A}_{G}$. Proposition~\ref{LIEAL} then follows by combining Corollary~\ref{COR} and Lemma~\ref{CLOALG}.

\section{Lie Algebra of Control Vector Fields}\label{LIEF} 

We now prove the controllability of system~\eqref{MODEL} by verifying the Lie algebra rank condition over the path-connected, open dense set $Q$ defined in~\eqref{PCOD}.
We first rewrite system equation~\eqref{MODEL} into a matrix form which makes it simpler to evaluate the Lie brackets of the control vector fields. To this end, we re-order the entries of the vector $p$ as follows.  Let $x^j_i$ be the $j$-th coordinate of agent $i$, and let
\begin{equation*}
x^j := \left (x^j_1,\ldots,x^j_N\right )
\end{equation*}
be a vector in $\mathbb{R}^N$ collecting the $j$-th coordinate of all agents. In the remainder of this section, the configuration vector  $p$ is taken as 
\begin{equation*}
p = \left (x^1,\ldots,x^n\right ).
\end{equation*}
Let $A$ be an $N$-by-$N$  matrix, and let
\begin{equation*}
D (A) := Diag(A,\ldots,A) 
\end{equation*}  
 be a block-diagonal matrix with $A$ repeated $n$ times. Then, with the notations above, system~\eqref{MODEL} can be expressed  as
 \begin{equation*}
 \dot p = \sum_{i\to j\in E} u_{ij} D(A_{ij}) p 
 \end{equation*}
with $u_{ij}$'s the scalar controls. Note that the control system above  is in a standard affine control form~\cite{bloch2003,SUS} with 
\begin{equation*}
g_{ij}(p) := D( A_{ij}) p, \hspace{10pt} i\to j\in E 
\end{equation*}
the control vector fields.

\begin{Definition}[Lie algebra rank condition]
Let $\mathcal{L}$ be the Lie algebra generated by the control vector fields $g_{ij}$'s.  Let $\mathcal{L}_p$ be the vector space obtained by evaluating the elements of $\mathcal{L}$ at $p$.  We say that $\cal{L}_p$ satisfies the {\bf Lie algebra rank condition} if 
$$
\dim \mathcal{L}_p =\dim P =  n\times N.
$$
\end{Definition}

We now establish the following result.

\begin{pro}\label{CONTROL}
Let $Q$ be the path-connected, open dense subset of $P$ defined by~\eqref{PCOD}. Then, under the assumption of Theorem~\ref{MTHM},  $\cal{L}_p$ satisfies the Lie algebra rank condition for all $p\in Q$.
\end{pro}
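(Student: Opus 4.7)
The strategy is to identify $\cal{L}$ as the image of the matrix Lie algebra $\bb{A}_{\ol G}$ acting linearly on $p$, and then to verify the rank condition by a row-by-row analysis of the resulting evaluation map at $p$.

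First, I would observe that each control vector field $g_{ij}(p)=D(A_{ij})p$ is linear in $p$, and that the Lie bracket of two linear vector fields $Ap$ and $Bp$ is $(BA-AB)p$. Hence $\cal{L}$ consists precisely of vector fields of the form $p\mapsto Mp$, with $M$ ranging over the matrix Lie algebra generated by $\{D(A_{ij}):i\to j\in E\}$. Because $A\mapsto D(A)$ is an injective Lie algebra homomorphism into block-diagonal matrices, this matrix Lie algebra equals $D(\ol{\bb{A}}_G)$, which by Proposition~\ref{LIEAL} equals $D(\bb{A}_{\ol G})$. Consequently $\cal{L}_p=\{D(M)p:M\in\bb{A}_{\ol G}\}$, and the task reduces to showing that this subspace of $\R^{nN}$ has dimension $nN$ for every $p\in Q$.

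Second, I would exploit the row-wise direct sum decomposition $\bb{A}_{\ol G}=\bigoplus_{i\in V}R_i$, where $R_i$ collects the matrices supported only on row $i$, with zero row sum and off-diagonal nonzero entries restricted to the set $N^+(i):=\{l:v_i\to v_l\in\ol G\}$ of vertices reachable from $v_i$ in $G$. For $M\in R_i$ the vector $D(M)p$ is nonzero only in the $n$ coordinates attached to agent $i$, and after eliminating the diagonal entry via the zero row sum it equals $\sum_{l\in N^+(i)}r_{il}(x_l-x_i)$ in those slots. Since contributions from distinct $R_i$'s live in disjoint coordinate blocks of $\R^{nN}$, surjectivity of $M\mapsto D(M)p$ is equivalent to the condition $\mathrm{span}\{x_l-x_i:l\in N^+(i)\}=\R^n$ for every $i\in V$.

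Third, I would verify this spanning condition using the coarse strong component decomposition of $G$. If $i\in V_j$ with $w_j\in W_+$, maximality forces $N^+(i)\subseteq V_j$ while strong connectivity of $G_j$ gives $V_j\setminus\{i\}\subseteq N^+(i)$; non-degeneracy of the sub-configuration $p_j$ (guaranteed by $p\in Q$ and $|G_j|>n+1$) then yields the span. If $w_j\notin W_+$, acyclicity of the skeleton produces a maximal $w_k\succ w_j$, and strong connectivity of $G_k$ together with a path from $V_j$ to $V_k$ in $G$ gives $V_k\subseteq N^+(i)$; for any $m\in V_k$ the identity $x_l-x_m=(x_l-x_i)-(x_m-x_i)$ shows $\mathrm{span}\{x_l-x_i:l\in V_k\}\supseteq\mathrm{span}\{x_l-x_m:l\in V_k\}$, and the right-hand side equals $\R^n$ by non-degeneracy of $p_k$.

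The main obstacle is recognizing that $\bb{A}_{\ol G}$ splits cleanly as $\bigoplus_i R_i$ so that the evaluation map decouples agent by agent. This rests on the observation that every generator $A_{kl}$ of $\bb{A}_G$ is supported on a single row, so the Lie-algebraic closure in Proposition~\ref{LIEAL} is automatically compatible with the row splitting. Once that decoupling is in place, the rank condition reduces to a purely geometric statement about non-degenerate sub-configurations, and the hypothesis $|G_i|>n+1$ on maximal components propagates via reachability in $\ol G$ to every vertex of $G$.
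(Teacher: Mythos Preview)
Your argument is correct, and it is organized differently from the paper's.  The paper also begins with~\eqref{LVF}, i.e.\ $\mathcal{L}_p=\{D(A)p:A\in\bb{A}_{\ol G}\}$, but then proceeds constructively: for each maximal component it extracts an $(n{+}1)$-point non-degenerate sub-configuration $p'$, invokes Lemma~\ref{SCAS} to get $n(n{+}1)$ independent vectors from $L_{p'}$, and for every remaining agent $x_i$ appeals to Proposition~\ref{NUMN} to select $n$ vectors of $p'$ that together with $x_i$ are non-degenerate, yielding $n$ more independent vectors in $L_{x_i}$.  Independence across different agents is then argued via disjoint nonzero rows of $A_{ij}X$.

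Your route bypasses both Lemma~\ref{SCAS} and Proposition~\ref{NUMN}.  By writing $\bb{A}_{\ol G}=\bigoplus_i R_i$ row by row, you turn the rank condition into the single geometric requirement $\span\{x_l-x_i:l\in N^+(i)\}=\R^n$ for every vertex~$i$, and then verify it directly from non-degeneracy of the maximal sub-configurations and reachability in~$\ol G$.  This is shorter and makes transparent exactly where the hypothesis on the maximal components enters: it is the minimal condition guaranteeing that every $N^+(i)$ contains an affinely spanning set.  The paper's approach, on the other hand, produces an explicit basis of $\mathcal{L}_p$, which could be useful if one later wanted concrete bracket expressions for the spanning directions.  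One minor remark: in your first case the clause ``guaranteed by $p\in Q$ and $|G_j|>n+1$'' slightly overstates things---membership in $Q$ alone already forces $r_{p_j}=n$; the cardinality bound is what makes $Q$ nonempty and path-connected, not what gives non-degeneracy at a specific $p$.
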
 

We prove Proposition~\ref{CONTROL} below.  
First, note that for any two matrices $A_{ij}$ and $A_{i'j'}$ in $  \mathbb{A}_{\overline{G}}$, we have 
\begin{equation*}
[D(A_{ij}), D(A_{i'j'}) ] = D\left( [A_{ij}, A_{i'j'}] \right ) .
\end{equation*}
Thus, by Proposition~\ref{LIEAL}, we have 
\begin{equation}\label{LVF}
\mathcal{L}_p = \left \{D(A) p \mid A\in  \mathbb{A}_{\overline{G}} \right \}
\end{equation}
It suffices to show that there are $(n\times N)$ linearly independent vectors in $\mathcal{L}_p$.

 We start the proof with a special case. Consider a strongly connected digraph $G$ with $N$ vertices, with $N = n + 1$. From Corollary~\ref{cor:threecases}, the set $Q$ is an open dense subset of $P$. We now establish the following result.

\begin{lem}\label{SCAS}
Let $G$ be a strongly connected digraph with $(n+1)$ vertices. Then, for any $p \in Q$,  we have $$\dim \mathcal{L}_p = n(n+1).$$
\end{lem}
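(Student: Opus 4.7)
The plan is to combine the structural result for $\ol{\bb{A}}_G$ (Proposition~\ref{LIEAL}) with a dimension count for the evaluation map $A \mapsto D(A)p$. Since $G$ is strongly connected on $n+1$ vertices, Lemma~\ref{lem:transclos} tells us that $\overline{G}$ is the complete digraph on $n+1$ vertices, so $\mathbb{A}_{\overline{G}}$ is the full space of zero row-sum matrices on $\R^{N\times N}$ with $N=n+1$. In particular, $\dim \mathbb{A}_{\overline{G}} = N(N-1) = n(n+1)$. Combined with equation~\eqref{LVF}, this reduces the claim to showing that the linear map
\[
\Phi: \mathbb{A}_{\overline{G}} \longrightarrow \R^{n(n+1)}, \quad \Phi(A) = (Ax^1, \ldots, Ax^n)
\]
is an isomorphism for all $p \in Q$.

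Since $\dim \mathbb{A}_{\overline{G}} = n(n+1) = \dim \R^{n(n+1)}$, it suffices to prove that $\Phi$ is injective. The key observation is that any $A \in \mathbb{A}_{\overline{G}}$ automatically satisfies $A\mathbf{1} = 0$, so $\mathbf{1}$ always lies in $\ker A$. Therefore, $A \in \ker \Phi$ precisely when the $n+1$ vectors $\mathbf{1}, x^1, \ldots, x^n \in \R^{N}$ all lie in $\ker A$.

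The main step is now to translate the non-degeneracy of $p$ into the statement that these $n+1$ vectors form a basis of $\R^{n+1}$. Recall that $p \in Q$ means the points $x_1, \ldots, x_{n+1}$ do not lie in any proper affine subspace of $\R^n$, which is equivalent to saying that the $(n+1)\times(n+1)$ matrix whose columns are $x^1, \ldots, x^n, \mathbf{1}$ is invertible. Hence $\mathbf{1}, x^1, \ldots, x^n$ span $\R^{n+1}$, which forces $\ker A = \R^{n+1}$ and thus $A = 0$. This yields injectivity of $\Phi$, and hence $\dim \mathcal{L}_p = n(n+1)$.

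I do not expect any genuine obstacle in this argument; the only subtle point is the equivalence between non-degeneracy of the configuration $p$ and the linear independence of $\mathbf{1}, x^1, \ldots, x^n$ in $\R^{n+1}$, which can be stated cleanly by observing that translating the affine condition into a linear one introduces exactly the extra direction $\mathbf{1}$ (homogenizing the affine coordinates).
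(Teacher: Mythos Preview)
Your argument is correct and is essentially the same as the paper's: both reduce to showing that the evaluation map $A \mapsto D(A)p$ on $\mathbb{A}_{\overline G}$ is injective by proving that the $(n+1)\times(n+1)$ matrix with columns $\mathbf{1}, x^1,\ldots,x^n$ is nonsingular for non-degenerate $p$. The only cosmetic difference is that the paper carries out an explicit row-reduction of this matrix (subtracting the first row from the others to expose the vectors $x_i - x_1$), whereas you invoke the standard homogenization equivalence directly.
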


\begin{proof}
Since $G$ is strongly connected, the transitive closure $\overline{G}$ is a complete graph. Hence there are $n(n+1)$ matrices in $A_{\overline{G}}$. We thus  need to show that the vectors  $\left \{D(A_{ij}) p\mid i\to j\in {\overline{G}} \right \}$  are linearly independent. This is equivalent to showing that if $D(A) p = 0$ for some $A\in A_{\overline{G}}$, then $A = 0$.  To do so,   we first introduce a  matrix as follows: 
\begin{equation*}
X_e := ({\bf 1},x^1,\ldots,x^n) \in \R^{(n+1)\times (n+1)}
\end{equation*}
where   $\mathbf{1}$ is the  vector of all ones in $\mathbb{R}^{n+1}$. We show that the matrix $X_e$ is nonsingular if $p\in Q$; indeed,  consider the following elementary row operation on $X_e$: let
\begin{equation*}
\widetilde X_e := RX_e
\end{equation*}
with $R$ given by
\begin{equation*}
R : = 
\begin{pmatrix}
1 & 0 & \ldots & 0\\
-1 & 1 & \ldots & 0\\
\vdots & \vdots & \ddots & \vdots\\
-1 & 0 & \ldots & 1 
\end{pmatrix}.
\end{equation*}
Then, by  computation, we have
\begin{equation*}
\widetilde X_ e = 
\begin{pmatrix}
1 & x^{\top}_1 \\
0 & (x_2 - x_1)^{\top}\\
\vdots  & \vdots \\
0 & (x_{n+1} - x_1)^{\top}
\end{pmatrix}.
\end{equation*}
Since $p\in Q$, $p$ is non-degenerate. Hence, the $n$ vectors $\{x_2 - x_1,\ldots,x_{n+1} - x_1\}$ are linearly independent. This shows that $\widetilde X_e$ is nonsingular,  and so is $X_e$.  On the other hand, if $D(A)  p = 0$, then $Ax^i = 0$ for all $i = 1,\ldots, n$.  Furthermore, $A\mathbf{1} = 0 $ since $A$ is a zero-row sum matrix. Thus, we have $A X_e = 0$. Since $X_e$ is nonsingular, we have $A = 0$.  This completes the proof. 
\end{proof}

We now  prove Proposition~\ref{CONTROL}.

\begin{proof}[ Proof of Proposition~\ref{CONTROL}] We prove the result by directly constructing a set of $(n\times N)$ linearly independent vectors in $\mathcal{L}_p$.  Suppose that $G_1, \ldots, G_q$ form the coarse strong component decomposition of $G$. Let $N_i$ be the number of vertices of $G_i$. Without loss of generality, we label the vertices of $G$ so that the first $N_1$ vertices are of $G_1$, the next $N_2$ vertices are of $G_2$, and so on so forth. Recall that $W_+$, defined in section~\ref{sec:prelimdig}, is the maximal set of the skeleton of $G$. and from the assumption of Theorem~\ref{MTHM},  we have $N_i > (n+1)$ for all $w_i\in W_+$. We first prove the result for the case when $W_+$  is a singleton, and then show how to lift this assumption.

Without loss of generality, we assume that $W_+ = \{w_1\}$. Let $p_1$ be the sub-configuration of $p$ associated with $G_1$. By assumption, $p$ is contained in $Q$. So then, the sub-configuration $p_1$ is non-degenerate in $\mathbb{R}^n$, and hence,  there must be  $(n+1)$ vectors, say $x_1,\ldots, x_{n+1}$,  such that  $p':=(x_1,\ldots, x_{n+1})$ is a non-degenerate configuration in $\mathbb{R}^n$. Now define 
\begin{equation*}
L_{p'}:= \{D( A_{ij}) p\mid 1\le i,j\le n+1, i\neq j\}.
\end{equation*}
From Lemma~\ref{lem:transclos}, $\ol{G}_1$ is a complete graph. Thus,  $v_i\to v_j$ is an edge of $\ol{G}_1$ (and hence, of $\ol{G}$) for all $1\le i,j\le n$. Thus, by Proposition~\ref{LIEAL}, we have $L_{p'} \subseteq \cal{L}_p$. There are $n(n+1)$ vectors in $L_{p'}$, and from Lemma~\ref{SCAS}, the vectors in $L_{p'}$ are linearly independent.

The remaining $n\times (N-n-1)$ vectors are constructed as follows. Since the configuration $p' = (x_1,\cdots, x_{n+1})$ is non-degenerate in $\bb{R}^n$, for each  $x_i$ with $i \ge (n+2)$, we know from Proposition~\ref{NUMN} that  there are $n$ vectors $\{x_{i_1},\ldots,x_{i_n}\}$ out of $\{x_1,\cdots, x_{n+1}\}$ such that these $n$ vectors, together with $x_i$ form a non-degenerate configuration in $\mathbb{R}^n$.  Now, with the choice of the $n$ vectors $\{x_{i_1},\ldots,x_{i_n}\}$, we define
\begin{equation*}
L_{x_i} := \{D( A_{ij}) p \mid j=i_1,\ldots,i_n \}. 
\end{equation*}
Since $w_1$ is the maximal element, for any vertex $w_k$ of the skeleton digraph $H$, there is a path from $w_k$ to $w_1$. Using the fact that $G_i$'s are strongly connected, we know that for any vertex $v_j$ of   
$G_1$,  there is a path from $v_i$ to $v_j$ in $G$, and hence, $v_i\to v_j$ is an edge of $\ol{G}$. Then, using Proposition~\ref{LIEAL} again, we know that $L_{x_i}\subseteq \cal{L}_{p'}$.   

We now show that the vectors in $L_{x_i}$ are linearly independent. To see this, we define
\begin{equation*}
X:= (x^1,\ldots,x^n)\in\bb{R}^{N\times n}.
\end{equation*}
Note that the vector $D(A_{ij})p$ is derived by  concatenating the $n$ vectors $A_{ij}x^1,\ldots, A_{ij}x^n\in \R^N$. Thus, the $n$ vectors in $L_{x_i}$ are linearly independent if and only if the $n$ matrices $\{A_{ii_1}X,\ldots, A_{ii_n}X\}$ are linearly independent.  By computation, the matrix $A_{ij}X$ satisfies the following condition: the $i$-th row of $A_{ij}X$ is $(x_j-x_i)^{\top}$ while all the other entries are zeros. 
Thus, it suffices to show that the $n$ vectors $x_{i_1}-x_i,\ldots,x_{i_n}-x_i$ are linearly independent. 
But, this follows from the fact that the configuration formed by the $(n+1)$ agents $x_{i_1},\ldots,x_{i_n},x_i$ is non-degenerate in $\R^n$. We have thus proved that the $n$ vectors in $L_{x_i}$ are linearly independent.

The computation above furthermore shows the following fact: choose another $i' = (n+2),\ldots, N$, and  choose $x_{i'_1},\ldots,  x_{i'_n}$ such that these $n$ vectors, together with $x_{i'}$ form another non-degenerate configuration in $\R^n$. Construct $L_{x_{i'}}$ in the same way as  $L_{x_i}$. Then, vectors in $L_{x_{i'}}$ are linearly independent of the vectors in $L_{x_i}$. 
Indeed, if $i\neq i'$, then the positions of the nonzero rows of $A_{ij}X$ and $A_{i'j'}X$ are different for any $j$ and $j'$, and hence, 
\begin{equation*}
\tr\left ((A_{i'j'} X)^\top A_{ij} X\right )  =  (D(A_{ij}) p)^\top D(A_{i'j'} )p = 0
\end{equation*} 
where $\tr(\cdot)$ is the trace of a matrix.  
Using the same argument, we can show  that the vectors in $L_{x_i}$ are  linearly independent of vectors in $L_{p'}$. Now define 
\begin{equation*}
L:= L_{p'}\cup L_{x_{n+2}} \cup\ldots\cup L_{x_{N}}.
\end{equation*}
Then, by construction, there are $n(n+1)$ vectors in $L_{p'}$,  and $n$ vectors in each $L_{x_i} $ for $i=(n+2),\ldots,N$. So then, there are  
$$
 n(n+1) + \sum^N_{i= n+2} n =  n\times N
$$ 
vectors in $L\subset\mathcal{L}_p$, and they are linearly independent. Thus, we have  
$$
\dim \mathcal{L}_p =\dim P =  n\times N, 
$$ 
and hence, $\cal{L}_p$ satisfies the Lie algebra rank condition.

To conclude, we point out  that the same analysis can be applied to the general case  $|W_+|>1$. Without loss of generality, we assume that  $W_+ = \{1,\ldots, k\}$ for some $k < q$. 
Let $p_1,\ldots,p_k$ be the sub-configurations of $p$ associated with the subgraphs $G_1,\ldots, G_k$ of $G$, respectively.  By assumption, $p$ is contained in $Q$. So then, $p_1,\ldots, p_k$ are non-degenerate configurations in $\R^n$. Hence, for each $i=1,\ldots, k$, there is a non-degenerate sub-configuration $p'_i $ of $p_i$ in $\R^n$ which is comprised of $(n+1)$ agents. Thus, we can construct $L_{p'_1},\ldots, L_{p'_k}$ in the same way as $L_{p'}$ in the previous case. Each $L_{p'_i}$ contains $n(n+1)$ linearly independent vectors. Furthermore, if $i\neq j$, then the positions of nonzero entires of vectors in $L_{p'_i}$ are different from those of vectors in $L_{p'_j}$. Thus, vectors in $L_{p'_i}$ are perpendicular to (and hence, independent of) vectors in $L_{p'_j}$.   

Now consider an agent $x_j$ which is not contained in $p'_i$ for any $i = 1,\ldots, k$. From the definition of $W_+$ and the fact that the $G_i$'s are strongly connected, we know that 
there exists at least a subgraph $G_{i}$, for $i= 1,\ldots, k$, such that for any vertex $v_{i_l}$ of $G_i$, there exists a path from $v_j$ to $v_{i_l}$. So then, $v_{j}\to v_{i_l}$ is an edge of $\ol{G}$ for any vertex $v_{i_l}$ of $G_i$. Thus, we can construct $L_{x_j}$ as in the case when $W_+ = \{1\}$, but replace $p'$ with $p'_i$. 
To the end, there are $n$ vectors in $L_{x_j}$, and the vectors in $L_{x_j}$ are perpendicular to vectors in $L_{x_{j'}}$ as long as $j\neq j'$. Let $p' $ be the sub-configuration of $p$ derived by taking the union of $p'_i$, for $i = 1,\ldots, k$. Define
$$
L := \left (\bigcup^k_{i=1} L_{p'_i} \right )\cup \left (\bigcup_{x_j\notin p'} L_{x_j}  \right).
$$
Then, there are $n\times N$ vectors in $L$, and they are linearly independent. This completes the proof.
\end{proof}

Theorem~\ref{MTHM} is then a consequence of Propositions~\ref{LPCOD} and~\ref{CONTROL}, and the Rachevsky-Chow's Theorem. The path controllability is a consequence of a result of Sussmann and Liu~\cite{SUS}. 

\begin{Remark}
We point out that the condition  $p\in Q$ is also a \emph{necessary} condition for $\cal{L}_{p}$ to satisfy the Lie algebra rank condition, and thus a necessary condition for controllability. This follows from the fact that the operations of taking the Lie bracket of the control vector fields  and of taking rotations  of a configuration $p$ commute. Now, if the configuration $p$ is degenerate of rank $k$, we can always rotate it by $\Theta$ so that the last $n-k$ coordinates of each agent in $\Theta\cdot p$ are zero and thus the last $n-k$ entries of the corresponding vector fields for each agent  are zero. Taking the Lie bracket of such vector fields always results in a vector field with the last $n-k$ coordinates for each agent  being zero. They thus form an involutive Lie algebra of dimension $kN$ and thus do not pass the Lie algebra rank condition.


\end{Remark}

\section{Conclusions}
In this paper, we have investigated the controllability of a bilinear formation control model with underlying network topology described by a directed graph $G$. We have shown that the system is approximately path-controllable over the path-connected, open dense subset $Q$ (defined in~\eqref{PCOD}) provided that  $G$ is weakly connected and each maximal component of $G$ has more than $(n+1)$ vertices. To establish the result, we have exhibited some  relations between the transitive closure of $G$ and  the Lie algebra closure of a set of zero row-sum matrices that arose naturally in the study of the formation control model.  Future work may focus on designing explicit control laws for steering the system to follow a specific path, and computing the least number of $u_{ij}$'s  that are in need for the controllability of system~\eqref{MODEL}. 


\bibliographystyle{unsrt}
\bibliography{FC}

\section*{Appendix}

We prove here Lemma~\ref{pksm} stated in Section~\ref{sec:nondeg}, and Lemma~\ref{lem:transclos} stated in Section~\ref{sec:LieAlg1}
\begin{proof}[Proof of Lemma~\ref{pksm}]
Letting $p\in P^k$, we show that there is an open neighborhood $U\subset \R^{nN}$ of $p$, an open neighborhood $V\subset \R^{nN}$ of $0$, and a diffeomorphism $f: U\to V $ such that 
\begin{equation*}
f(U\cap P^k) = V \cap \R^{d_k} 
\end{equation*}
where $\R^{d_k}$ is a linear subspace of $\R^{nN}$ with the last $(nN - d_k)$ entries being zeros. 
For simplicity, but without loss of generality, we assume that 
\begin{equation*}
\operatorname{rank} \{x_2 - x_1,\ldots, x_{k+1} -  x_1\} = k.
\end{equation*}  
Denote by $p_1$ the sub-configuration formed by $x_1,\ldots, x_{k+1}$, and by $p_{-1}$ the sub-configuration formed by the remaining agents. 
 
Choose an open neighborhood $U_1\subset \R^{n(k+1)}$ of $p_1$ such that if $p'_1 = (x'_1,\ldots, x'_{k+1})\in U_1$, then 
\begin{equation*}
\operatorname{rank} \left\{x'_2 - x'_1,\ldots, x'_{k+1} -  x'_1\right\} = k.
\end{equation*}
Choose  any open neighborhood $U_{-1}\subset \R^{n(N-k-1)}$  of $p_{-1}$, and let
\begin{equation*}
U := U_1\times U_{-1}.
\end{equation*}
Then, $U$ is an open neighborhood of $p$ in $\R^{nN}$. For each $p'\in U$, define an $n$-by-$k$ matrix as follows:
\begin{equation*}
A_{p'_1} := (x'_2 - x'_1,\ldots, x'_{k+1} -  x'_1)\in \R^{n\times k}. 
\end{equation*}
Choose an $n\times (n-k)$ matrix $B_{p'_1}$  such that $B_{p'_1}$ is of full-column rank, i.e., 
$
\operatorname{rank} B_{p'_1} = n-k 
$,  
and moreover, the columns of $B_{p'_1}$ are perpendicular to columns of $A_{p'_1}$, i.e.,  
$
B^\top_{p'_1} A_{p'_1} = 0
$. 
Furthermore, we can choose $B_{p'_1}$ so that it depends smoothly on $p'_1\in U_1$; indeed, we first find an $n\times (n - k)$ matrix $B$ so that by shrinking $U_1$ if necessary, the matrix $(A_{p'_1}, B)$ is nonsingular for all $p'_1\in U_1$. Then, by applying the Gram-Schmidt process, we get $B_{p'_1}$. 
Now for each $p'_1\in U_1$, define an $n$-by-$n$ matrix as follows: 
\begin{equation*}
L_{p'_1} := \left (A_{p'_1},B_{p'_1} \right )^\top \in \R^{n\times n}.
\end{equation*}
Then, by construction, $L_{p'_1}$ is invertible, depending smoothly on $p'_1\in U_1$.  

We now construct the diffeomorphism $f: U \to \R^{nN}$.  First, for each $p' = (x'_1,\ldots, x'_N)\in U$,  we define   
 a set of vectors $v_1(p'),\ldots,v_N(p')$ in $\R^n$ as follows:
\begin{equation}\label{eq:defvip}
v_i(p') :=
\left\{
\begin{array}{ll}
x'_i - x_i & \text{if } i=1,\ldots, k+1,\\
L_{p'_1} x'_i - L_{p_1} x_i & \text{if } i = k+2,\ldots, N.
\end{array}
\right.
\end{equation}
Then, we define 
\begin{equation*}
f: p'\mapsto \left (v_1(p'),\ldots, v_N(p') \right).
\end{equation*}
From~\eqref{eq:defvip}, we know that the map $f$ is smooth and open. Let $V$ be the image of $f$, i.e., $V: = f(U)$. 
Since $f(p) = 0$, we have that $V$ 
is an open neighborhood of $0$ in $\R^{nN}$. 

We now show that the map $f: U \to V$ is 
invertible. Pick a vector $(v_1,\ldots, v_N)\in V$ with $v_i\in\R^n$. Let
\begin{equation*}
p'_1 := (v_1+x_1,\ldots,v_{k+1} + x_{k+1} ). 
\end{equation*} 
Then, $p'_1\in U_1$, 
and hence, $L_{p'_1}$ is invertible. Define
\begin{equation*}
p'_{-1} := L^{-1}_{p'_1} \left (v_{k+2} + L_{p_1}x_{k+2}, \ldots, v_{N} + L_{p_1}x_{N} \right),
\end{equation*}
Then, the map $f^{-1}: V\to U$ defined by
\begin{equation*}
f^{-1}: (v_1,\ldots, v_N)\mapsto (p'_{1}, p'_{-1})
\end{equation*}
is the inverse of $f$, and it is also smooth by the construction. Thus, $f$ is a diffeomorphism between $U$ and $V$. 

We now show that the image of $U \cap P^k$ under $f$ is $V \cap \R^{d_k}$. First, note that if $p'\in U$, then $r_{p'} \ge k$ because 
\begin{equation*}
\operatorname{rank} \left \{x'_2 - x'_1,\ldots, x'_{k+1} -  x'_1\right \} = k.
\end{equation*}
Moreover, the equality $r_{p'} = k$ holds if and only if each $x'_i$, for $i=k+2,\ldots, N$, is in the column space of $A_{p'_1}$. Equivalently, $r_{p'} = k$ if and only if 
\begin{equation}\label{eq:bp1x10}
B_{p'_1} x'_i = 0, \hspace{10pt} \forall i=k+2,\ldots, N.
\end{equation}
Then, following~\eqref{eq:defvip} and~\eqref{eq:bp1x10}, we conclude that  $r_{p'} = k$ if and only if   the last $(n-k)$ entries of $v_i(p')$ are zero for all $i = k+2,\ldots, N$. The total number of these zero entries are $
(n-k)(N - k - 1)
$. 
On the other hand, all the other entries of $v_i(p')$'s are free to choose as long as $(v_1(p'),\ldots,v_N(p'))$ is in $V$. The number of these free entries is then
\begin{equation*}
nN - (n-k)(N - k - 1)
\end{equation*}
which is equal to $d_k$. 
Thus, we conclude that
\begin{equation*}
f(U\cap P^k) = V\cap \R^{d_k}.
\end{equation*}
This completes the proof.  
\end{proof}  
 
We now prove Lemma~\ref{lem:transclos}. 
 
 \begin{proof}[Proof of Lemma~\ref{lem:transclos}]
 First, note that if $G$ is strongly connected, then its transitive closure $\overline{G}$ is a complete graph. Indeed, for any pair of vertices $(v_i,v_j)$, there is a path from $v_i$ to $v_j$, and also a path from $v_j$ to $v_i$ in $G$.  Thus, each $\ol{G}_i$, for $i=1,\ldots, q$, is a complete graph.  
 
 Next, we show that  that $\overline{G}_1, \ldots, \overline{G}_q$ form the coarse strong component decomposition of $\overline{G}$. Let $V_i$ be the vertex set of $G_i$, and of $\overline{G}_i$ as well. 
 Choose any subset $\{i_1,\ldots, i_m\}$ of $\{1,\ldots, q\}$. Let    
$$V' := \sqcup^m_{j = 1}V_{i_j}, $$ 
and let ${G}'$ be the subgraph of $\overline{G}$ induced by $V'$. It suffices to show that ${G}'$ is not strongly connected since otherwise, we would have a strong component decomposition with the number of components strictly less than $q$.  The proof is done by contradiction. Suppose that $G'$ is strongly connected; then for any vertex $v_{i'_k}\in V_{i_k}$ and  any vertex $v_{i'_l}\in V_{i_l}$, for $k\neq l$, there is a path from $v_{i'_k}$ to $v_{i'_l}$ in $G'$ (and hence, in $\ol{G}$). So then, by the definition of transitive closure (Def.~\ref{Def: Tranclos}), there is a path from $v_{i'_k}$ to $v_{i'_l}$ in $G$.  Thus, we have that $w_{i_l} \succ w_{i_k}$. But conversely, we can apply the same argument, and have that $w_{i_k}\succ w_{i_l}$ which is a contradiction. So then, the subgraph $G'$ can not be strongly connected.  
Thus, we have shown that $\overline{G}_1, \ldots, \overline{G}_q$ form the coarse strong component decomposition of $\ol{G}$.
 
 It now remains to show that $\ol{H}$ is the skeleton digraph of $\ol{G}$. First, we show that $\ol{H}$ is acyclic. Suppose not, then there is a cycle 
 $$
 w_{i_1}\to w_{i_2} \to \ldots \to w_{i_k} \to w_{i_1}
 $$
 contained in $\ol{H}$. Then again, by the definition of transitive closure, there is a path from $w_{i_l}$ to $w_{i_{l+1}}$, for all $l = 1,\ldots, k-1$, and a path from $w_{i_k}$ to $w_{i_1}$.  Using these paths, we obtain a cycle in $H$ which contradicts the fact that $H$ is acyclic. Thus, the digraph $\ol{H}$ is acyclic.  It now suffices to show that  if there is a path 
 from $w_i$ to $w_j$ in $H$, then $w_i\to w_j$ is an edge of the digraph $\ol{H}$. Let 
 $$
 w_{i_1}\to w_{i_2} \to \ldots \to w_{i_k} 
 $$
 be the path from $w_i$ to $w_j$ with ${i_1} = i$ and ${i_k} = {j}$. Pick a vertex $v_{i_l}\in V_{i_l}$ for each $l = 1,\ldots, k$. Since $w_{i_{l}}\to w_{i_{l+1}}$ is an edge of $H$, from the definition of the skeleton digraph, there is a path from a vertex of $G_{i_l}$ to a vertex of $G_{i_{l+1}}$. But since $G_{i_l}$ and $G_{i_{l+1}}$ are strongly connected, there is a path from $v_{i_l}$ to $v_{i_{l+1}}$ in $G$, and this holds for each $l = 1,\ldots, k-1$. Using these paths, we then obtain a path from $v_{i_1}$ to $v_{i_k}$ in $G$, and hence, $v_{i_1}\to v_{i_k}$ is an edge of $\ol{G}$. Since $v_{i_1}\in V_i$  and $v_{i_k}\in V_j$,  we conclude that $w_i\to w_j$ is an edge of $\ol{H}$.   
 This completes the proof.
 \end{proof}

\end{document}